\newtheorem{problem}{\textbf{Problem}}
\newtheorem{definition}{\textbf{Definition}}
\newtheorem{theorem}{\rm\textbf{Theorem}}
\newtheorem{lemma}{\rm\textbf{Lemma}}
\newtheorem{remark}{\rm\textbf{Remark}}
\providecommand{\U}[1]{\protect\rule{.1in}{.1in}}
\begin{document}

\title{{\LARGE \textbf{Control Barrier Functions for Systems with High Relative Degree}}}
\author{Wei Xiao and Calin Belta\thanks{This work was supported in
part by the NSF under grants IIS-1723995 and CPS-1446151.}\thanks{The authors are
with the Division of Systems Engineering and Center for Information and
Systems Engineering, Boston University, Brookline, MA, 02446, USA
\texttt{{\small \{xiaowei,cbelta\}@bu.edu}}}}
\maketitle

\begin{abstract}
This paper extends control barrier functions (CBFs) to high order control barrier functions (HOCBFs) that can be used for high relative degree constraints. The proposed HOCBFs are more general than recently proposed (exponential) HOCBFs. 
We introduce high order barrier functions (HOBF), and show that their satisfaction of Lyapunov-like conditions implies the forward invariance of the intersection of a series of sets. We then introduce HOCBF, and show that any control input that satisfies the HOCBF constraints renders the intersection of a series of sets forward invariant. We formulate optimal control problems with constraints given by HOCBF and control Lyapunov functions (CLF) and analyze the influence of the choice of the class $\mathcal{K}$ functions used in the definition of the HOCBF on the size of the feasible control region. We also provide a promising method to address the conflict between HOCBF constraints and control limitations by penalizing the class $\mathcal{K}$ functions.
We illustrate the proposed method on an adaptive cruise control problem.  
\end{abstract}

\thispagestyle{empty} \pagestyle{empty}


\section{INTRODUCTION}
\label{sec:intro}

 Barrier functions (BF) are 
 Lyapunov-like functions \cite{Tee2009}\cite{Wieland2007}, whose use can be traced back to optimization problems \cite{Boyd2004}. More recently, they have been employed in verification and control, e.g., to prove set invariance \cite{Aubin2009}\cite{Prajna2004}\cite{Prajna2007}\cite{Wisniewski2013} and for multi-objective control \cite{Panagou2013}. Control BF (CBF) are extensions of BFs for control systems. Recently, it has been shown that CBF can be combined with
 control Lyapunov functions (CLF) \cite{Sontag1983}\cite{Artstein1983}\cite{Freeman1996}\cite{Aaron2012} as constraints to form quadratic programs (QP) \cite{Galloway2013} that are solved in real time. The CLF constraints can be relaxed \cite{Aaron2014} such that they do not conflict with the CBF constraints to form feasible QPs.
 
 In \cite{Tee2009} it was proved that if a barrier function for a given set satisfies Lyapunov-like conditions, then the set is forward invariant. A less restrictive form of a barrier function, which is allowed to grow when far away from the boundary of the set, was proposed in \cite{Aaron2014}. 
 Another approach that allows a barrier function to be zero was proposed in \cite{Glotfelter2017} \cite{Lindemann2018}. This simpler form has also been considered in time-varying cases and applied to enforce Signal Temporal Logic (STL) formulas as hard constraints \cite{Lindemann2018}.
 
 The barrier functions from \cite{Aaron2014} and \cite{Glotfelter2017} work for constraints that have relative degree one (with respect to the dynamics of the system). A backstepping approach was introduced in \cite{Hsu2015} to address higher relative degree constraints, and it was shown to work for relative degree two. A CBF method for position-based constraints with relative degree two was also proposed in \cite{Wu2015}. A more general form, which works for 
 for arbitrarily high relative degree constraints, was proposed in \cite{Nguyen2016}. The method in \cite{Nguyen2016} employs input-output linearization and finds a pole placement controller with negative poles to stabilize the barrier function to zero. Thus, this barrier function is an exponential barrier function.

In this paper, we propose a barrier function for high relative degree constraints, called high-order control barrier function (HOCBF), which is simpler and more general than the one from \cite{Nguyen2016}. 
Our barrier functions are not 
restricted to exponential functions, and are determined  by a set of class $\mathcal{K}$ functions. The general form of a barrier function proposed here is associated with the forward invariance of the intersection of a series of sets. 

We formulate optimal control problems with constraints given by HOCBF and CLF and analyze the influence of the choice of the class $\mathcal{K}$ functions used in the definition of the HOCBF on the size of the feasible control region and on the performance of the system. 
We also show that, by applying penalties on the class $\mathcal{K}$ functions, we can manage possible conflicts between HOCBF constraints and other constraints, such as control limitations. The main advantage of using the general form of HOCBF proposed in this paper is that it can be adapted to different types of systems and constraints. 

We illustrate the proposed method on an adaptive cruise control problem. We consider 
square root, linear and quadratic class $\mathcal{K}$ functions in the HOCBF. The simulations show that the 
 results are heavily dependent on the choice of the class $\mathcal{K}$ functions.

 
 \section{PRELIMINARIES}
 \label{sec:prelim}

 \begin{definition} \label{def:classk}
 	({\it Class $\mathcal{K}$ function} \cite{Khalil2002}) A continuous function $\alpha:[0,a)\rightarrow[0,\infty), a > 0$ is said to belong to class $\mathcal{K}$ if it is strictly increasing and $\alpha(0)=0$. 
 \end{definition}

\begin{lemma}  \label{lem:bf} (\cite{Glotfelter2017})
	 Let $b: [t_0, t_1]\rightarrow \mathbb{R}$ be a continuously differentiable function.  If $\dot b(t) \geq \alpha(b(t))$, for all $t\in[t_0,t_1]$, where $\alpha$ is a class $\mathcal{K}$ function of its argument, and $b(t_0)\geq 0$, then $b(t)\geq 0, \forall t\in[t_0,t_1]$.
\end{lemma}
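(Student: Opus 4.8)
The plan is to read this as a comparison-principle statement and prove it with the scalar comparison lemma: the differential inequality $\dot b(t)\ge\alpha(b(t))$ should force $b$ to stay above the solution of the scalar ODE $\dot y=\alpha(y)$, and the hypothesis $\alpha(0)=0$ pins the trivial solution $y\equiv 0$ at the bottom.

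Concretely, I would take $y(\cdot)\equiv 0$, which solves $\dot y=\alpha(y)$ since $\alpha(0)=0$. Because $b(t_0)\ge 0=y(t_0)$ and $\dot b(t)\ge\alpha(b(t))$ on $[t_0,t_1]$, the comparison lemma (see \cite{Khalil2002}) in its lower-bound form gives $b(t)\ge y(t)=0$ for all $t\in[t_0,t_1]$, which is the claim. (Comparing instead with the solution issued from $b(t_0)$ works equally well: since $\alpha$ maps $[0,a)$ into $[0,\infty)$, that solution is nondecreasing while nonnegative, hence stays $\ge b(t_0)\ge 0$, and one even obtains $b(t)\ge b(t_0)$.)

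If one prefers to avoid invoking the comparison lemma, the same conclusion follows by contradiction. Suppose $b(t^{*})<0$ for some $t^{*}\in(t_0,t_1]$. By continuity of $b$ and $b(t_0)\ge 0$, the instant $\tau:=\sup\{\,t\le t^{*}:b(t)\ge 0\,\}$ is well defined, lies in $[t_0,t^{*})$, and satisfies $b(\tau)=0$ with $b(t)<0$ on $(\tau,t^{*}]$. It then remains to show $b$ cannot leave $[0,\infty)$ at $\tau$: immediately to the right of $\tau$, $\dot b\ge\alpha(b)$ keeps $b$ at or above the solution of $\dot y=\alpha(y)$ through $(\tau,0)$, namely $y\equiv 0$, contradicting $b<0$ on $(\tau,t^{*}]$.

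The step that genuinely needs care — and which I expect to be the real obstacle, rather than the surrounding bookkeeping — is the domain and regularity of $\alpha$. By Definition~\ref{def:classk} a class $\mathcal{K}$ function lives only on $[0,a)$, so $\alpha(b(t))$ is meaningful only where $b(t)\ge 0$; and to push the comparison through the crossing time $\tau$ one needs $y\equiv 0$ to be the \emph{minimal} solution of $\dot y=\alpha(y)$, i.e.\ essentially a local Lipschitz condition on $\alpha$ near $0$ (without it, e.g.\ $\alpha$ behaving like $\sqrt{\cdot}$ near $0$, the differential inequality can admit trajectories that dip below zero). The standard remedy — and the reading under which the lemma is used in the sequel — is to take $\alpha$ locally Lipschitz (as an extended class $\mathcal{K}$ function), or to localize the argument to a neighborhood on which $b\ge 0$; with that in place the comparison step is immediate and the lemma follows.
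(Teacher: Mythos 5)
The paper states this lemma with a citation to \cite{Glotfelter2017} and gives no proof of its own, so there is no in-paper argument to compare yours against; I can only assess the proposal on its merits. Your route --- comparing $b$ with the trivial solution $y\equiv 0$ of $\dot y=\alpha(y)$, pinned down by $\alpha(0)=0$ --- is the standard proof of this statement in the CBF literature, and your contradiction variant via the last crossing time $\tau$ is a localized form of the same idea. Both are sound \emph{given} the caveat you raise at the end, and that caveat is in fact the entire mathematical content of the lemma: as literally stated, $\alpha$ is defined only on $[0,a)$, so the hypothesis ``$\dot b(t)\ge\alpha(b(t))$ for all $t$'' already presupposes $b(t)\ge 0$ and the conclusion is vacuous; to make the lemma a usable tool one must extend $\alpha$ past zero, and then the minimality of $y\equiv 0$ (equivalently, local Lipschitz continuity of the extension at $0$) is genuinely needed. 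Your $\sqrt{\cdot}$ warning is a real counterexample rather than a mere worry: with the odd extension $\alpha(s)=\mathrm{sign}(s)\sqrt{|s|}$, the function $b(t)=-(t-\tau)^2/4$ satisfies $\dot b(t)=\alpha(b(t))$ with $b(\tau)=0$ yet is negative for $t>\tau$.

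Two smaller points. First, the proposal is a sketch rather than a finished proof: the closing claim that ``the comparison step is immediate'' still defers to the comparison lemma at exactly the delicate crossing point. With a Lipschitz constant $L$ for the extension near $0$ in hand, you can close it elementarily: on an excursion $(\tau,t^{*}]$ where $b<0$ one has $\dot b\ge \alpha(b)\ge -L|b|=Lb$, hence $\frac{d}{dt}\bigl(e^{-Lt}b(t)\bigr)\ge 0$ and $b(t^{*})\ge e^{L(t^{*}-\tau)}b(\tau)=0$, a contradiction. Second, when the paper actually invokes this lemma (in the proof of Theorem \ref{thm:hobf}) the available inequality is $\dot\psi_{m-1}+\alpha_m(\psi_{m-1})\ge 0$, i.e.\ $\dot b\ge -\alpha(b)$ rather than $\dot b\ge \alpha(b)$ as stated here; your argument transfers verbatim (compare with $\dot y=-\alpha(y)$, for which $y\equiv 0$ is still the relevant solution, subject to the same Lipschitz caveat), but it is worth noting that the sign in the lemma statement does not match its later use.
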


\vspace{2mm}
 Consider a system of the form
 \begin{equation}\label{eq:sys}%
 \dot {\bm{x}} = f(\bm x),
 \end{equation}
 with $\bm x\in \mathbb{R}^n$ and $f:\mathbb{R}^n\rightarrow \mathbb{R}^n$ locally Lipschitz. Solutions $\bm x(t)$ of (\ref{eq:sys}), starting at $\bm x(t_0)$, $t\geq t_0$, are forward complete.
 
 \vspace{2mm} 
In this paper, we also consider affine control systems in the form
 \begin{equation} \label{eq:affine}%
 \dot {\bm{x}} = f(\bm x) + g(\bm x)\bm u,
 \end{equation}
 where  $\bm x\in \mathbb{R}^n$, $f$ is as defined above, $g:\mathbb{R}^n \rightarrow \mathbb{R}^{n\times q}$ is locally Lipschitz, and $\bm u\in U \subset \mathbb{R}^q$ ($U$ denotes the control constraint set) is Lipschitz continuous. Solutions $\bm x(t)$ of (\ref{eq:affine}), starting at $\bm x(t_0)$, $t\geq t_0$, are forward complete.

 \begin{definition} \label{def:forwardinv}
 	A set $C\subset\mathbb{R}^n$ is forward invariant for system (\ref{eq:sys}) (or (\ref{eq:affine})) if its solutions starting at all $\bm x(t_0) \in C$ satisfy $\bm x(t)\in C$ for $\forall t\geq t_0$.
 \end{definition} 

Let 
	\begin{equation}\label{eqn:C-b}
	C := \{\bm x\in \mathbb{R}^n:b(\bm x)\geq 0\},
	\end{equation} 
where $b:\mathbb{R}^n\rightarrow \mathbb{R}$ is a continuously differentiable function. 

\begin{definition} \label{def:bf2}
	({\it Barrier function} \cite{Glotfelter2017} \cite{Lindemann2018}): 
The continuously differentiable function $b:\mathbb{R}^n\rightarrow \mathbb{R}$ is a barrier function (BF) for system (\ref{eq:sys}) if there exists a class $\mathcal{K}$ function $\alpha$ such that
	\begin{equation}
	\dot b(\bm x) + \alpha(b(\bm x)) \geq 0,
	\end{equation}
\end{definition}
for all $\bm x\in C$.

\begin{theorem} \label{thm:bf2} (\cite{Lindemann2018})
	Given a set $C$ as in Eqn. (\ref{eqn:C-b}), if there exist a BF $b:C\rightarrow \mathbb{R}$, then $C$ is forward invariant for system (\ref{eq:sys}).
\end{theorem}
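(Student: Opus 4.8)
The plan is to translate the statement into a one-dimensional differential inequality along each solution and then invoke the comparison result of Lemma~\ref{lem:bf}. Fix an arbitrary initial condition $\bm x(t_0)\in C$ and let $\bm x(t)$, $t\ge t_0$, denote the corresponding solution of \eqref{eq:sys}, which is forward complete by assumption. Define the scalar function $\eta(t):=b(\bm x(t))$. Since $b$ is continuously differentiable and $f$ is locally Lipschitz, $\eta$ is continuously differentiable with $\dot\eta(t)=\dot b(\bm x(t))$, and $\eta(t_0)=b(\bm x(t_0))\ge 0$ because $\bm x(t_0)\in C$. By the definition \eqref{eqn:C-b} of $C$, proving the theorem amounts to showing $\eta(t)\ge 0$ for all $t\ge t_0$.

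First I would note that as long as the solution has not left $C$, the barrier-function inequality of Definition~\ref{def:bf2} holds along it, i.e. $\dot\eta(t)+\alpha(\eta(t))\ge 0$, equivalently $\dot\eta(t)\ge -\alpha(\eta(t))$. To upgrade this to a global bound I would argue by contradiction via a first-exit time: if $\eta(T)<0$ for some $T>t_0$, set $t^\ast:=\sup\{t\in[t_0,T]:\eta(s)\ge 0\ \text{for all}\ s\in[t_0,t]\}$; continuity of $\eta$ together with $\eta(t_0)\ge 0$ gives $t_0\le t^\ast<T$, $\eta(t^\ast)=0$, and $\bm x(s)\in C$ for $s\in[t_0,t^\ast]$. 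On $[t_0,t^\ast]$ the inequality $\dot\eta\ge -\alpha(\eta)$ is available, and comparison with the scalar dynamics $\dot y=-\alpha(y)$ started from $y(t_0)=\eta(t_0)\ge 0$ — whose solutions cannot cross the equilibrium at the origin since $\alpha(0)=0$ — keeps $\eta$ nonnegative on $[t_0,t^\ast]$; this is exactly the content of Lemma~\ref{lem:bf} applied to $\eta$. Restarting the same reasoning at $t^\ast$ (where $\eta(t^\ast)=0\ge 0$) propagates nonnegativity past $t^\ast$, contradicting $\eta(T)<0$. Equivalently, one checks that the maximal interval on which $\eta\ge 0$ cannot have a finite right endpoint.

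I expect the delicate point to be precisely the behavior on the boundary $\{b=0\}$: because $\alpha$ is class $\mathcal{K}$, $\alpha(0)=0$, so $\dot b\ge -\alpha(b)$ only forces $\dot b\ge 0$ (not strictly positive) on $\partial C$, and a crude ``the derivative is nonnegative, hence we cannot leave'' argument is insufficient when $\dot b=0$ there. This is exactly why the comparison/monotonicity Lemma~\ref{lem:bf} is the one nontrivial ingredient; a clean alternative would be to invoke Nagumo's theorem on $\partial C$, which additionally requires a mild regularity condition on $b$ at the boundary. Once nonnegativity of $\eta$ is established, $\bm x(t)\in C$ for all $t\ge t_0$, so $C$ is forward invariant in the sense of Definition~\ref{def:forwardinv}, completing the proof.
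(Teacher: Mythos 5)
The paper never actually proves Theorem~\ref{thm:bf2} --- it is quoted from the cited reference --- but its proof of the generalization, Theorem~\ref{thm:hobf}, is precisely the $m$-fold iteration of your opening move for $m=1$: set $\eta(t)=b(\bm x(t))$, observe $\eta(t_0)\ge 0$, and invoke Lemma~\ref{lem:bf} (read, as the paper itself uses it, with $\dot b\ge-\alpha(b)$ rather than the $+\alpha$ that appears in its statement, which is a typo). So your core reduction is exactly the paper's argument, and you are right that Lemma~\ref{lem:bf} is the one nontrivial ingredient.

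The problem is with the contradiction scaffolding you add to handle the fact that Definition~\ref{def:bf2} only guarantees $\dot\eta\ge-\alpha(\eta)$ while $\bm x(t)\in C$. Applying Lemma~\ref{lem:bf} on $[t_0,t^\ast]$ merely re-derives $\eta\ge 0$ there, which already holds by the construction of $t^\ast$; no contradiction is produced at that stage. The contradiction is supposed to come from ``restarting the reasoning at $t^\ast$,'' but restarting requires the differential inequality on some interval $[t^\ast,t^\ast+\epsilon]$, and that inequality is only available while the trajectory remains in $C$ --- which is exactly what is to be proved there. Indeed, by the definition of $t^\ast$ as a supremum, $\eta$ takes negative values at times arbitrarily close to $t^\ast$ from the right, so the hypothesis of Lemma~\ref{lem:bf} cannot be asserted on any such interval without circularity. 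What actually closes the argument is either (i) treating Lemma~\ref{lem:bf} as the paper does, namely as a comparison result whose differential inequality is taken to hold on the entire interval (with $\alpha$ suitably extended below zero), which makes the first-exit construction unnecessary, or (ii) the Nagumo-type boundary argument you allude to, which needs the regularity condition on $b$ at $\partial C$ that you correctly flag but do not verify. To your credit, you identify the delicate point exactly where it lies; but your proposed repair does not repair it, and in this respect your proof ends up resting on the same unexamined step as the paper's own treatment.
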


  \begin{definition} \label{def:cbf2}
  	({\it Control barrier function} \cite{Glotfelter2017} \cite{Lindemann2018}): Given a set $C$ as in Eqn. (\ref{eqn:C-b}), $b(\bm x)$ is a control barrier function (CBF) for system (\ref{eq:affine}) if there exists a class $\mathcal{K}$ function $\alpha$ such that
 	\begin{equation} \label{eq:cbf2}
 	L_fb(\bm x)+L_gb(\bm x) \bm u + \alpha(b(\bm x)) \geq 0 
 	\end{equation}
 \end{definition}
for all $\bm x\in C$.
 
 \begin{theorem} \label{thm:cbf2}(\cite{Glotfelter2017}, \cite{Lindemann2018})
 	 Given a CBF $b$ with the associated set $C$ from Eqn. (\ref{eqn:C-b}), any Lipschitz continuous controller $ \bm u \in K_{cbf}(\bm x)$, with 
 	$$K_{cbf}(\bm x) := \{ \bm u\in U: L_fb(\bm x)+L_gb(\bm x) \bm u + \alpha(b(\bm x)) \geq 0 \},$$
 	renders the set $C$ forward invariant for affine control system (\ref{eq:affine}). 
 \end{theorem}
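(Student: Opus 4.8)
The plan is to reduce the claim to Theorem~\ref{thm:bf2} by closing the loop. Fix a Lipschitz continuous controller, i.e. a map $\bm u:\mathbb{R}^n\to U$ with $\bm u(\bm x)\in K_{cbf}(\bm x)$ for every $\bm x$, and substitute it into (\ref{eq:affine}) to obtain the closed-loop system $\dot{\bm x} = f(\bm x) + g(\bm x)\bm u(\bm x) =: f_{cl}(\bm x)$. Since $f$, $g$ and $\bm u$ are locally Lipschitz, so is $f_{cl}$, so the closed-loop system has the form (\ref{eq:sys}) and its solutions starting at $\bm x(t_0)$ are forward complete by the standing assumption.

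Next I would verify that $b$ is a BF for the closed-loop system in the sense of Definition~\ref{def:bf2}, with the \emph{same} class $\mathcal{K}$ function $\alpha$ that comes with the CBF. For $\bm x\in C$, the chain rule gives $\dot b(\bm x)=\nabla b(\bm x)^\top f_{cl}(\bm x)=L_fb(\bm x)+L_gb(\bm x)\bm u(\bm x)$, where the Lie derivatives are taken along the drift and control vector fields. Because $\bm u(\bm x)\in K_{cbf}(\bm x)$, we have $L_fb(\bm x)+L_gb(\bm x)\bm u(\bm x)+\alpha(b(\bm x))\geq 0$, that is, $\dot b(\bm x)+\alpha(b(\bm x))\geq 0$ for all $\bm x\in C$, which is exactly the condition in Definition~\ref{def:bf2} for the closed-loop system.

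Finally, applying Theorem~\ref{thm:bf2} to the closed-loop system and to the set $C$ of (\ref{eqn:C-b}) yields that $C$ is forward invariant for the closed-loop dynamics: every solution starting in $C$ stays in $C$ for all $t\geq t_0$. Since these are precisely the solutions of (\ref{eq:affine}) under the chosen controller $\bm u\in K_{cbf}(\bm x)$, this is the desired conclusion.

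I expect the only delicate point to be confirming that the hypotheses of Theorem~\ref{thm:bf2} are genuinely met — in particular that the CBF inequality, which is assumed only on $C$ (not on all of $\mathbb{R}^n$), is enough, and that the closed-loop vector field inherits local Lipschitzness so that solutions are well defined and forward complete. Both follow from the standing regularity assumptions on $f$, $g$, $\bm u$ and from the statement of Theorem~\ref{thm:bf2} itself (whose proof in turn rests on the comparison-type estimate of Lemma~\ref{lem:bf}), so no new analytic work is needed beyond the substitution and the chain-rule computation above.
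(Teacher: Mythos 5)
Your proposal is correct and follows the standard route: close the loop with the Lipschitz controller, observe that the CBF inequality makes $b$ a BF (Definition \ref{def:bf2}) for the resulting locally Lipschitz autonomous system, and invoke Theorem \ref{thm:bf2} (which rests on Lemma \ref{lem:bf}). The paper does not spell out a proof of this cited preliminary result, but this reduction is exactly the argument in the cited references and mirrors the paper's own proof of the analogous HOCBF statement (Theorem \ref{cor:hocbf}), which likewise reduces the control case to the barrier-function case.
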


 \begin{remark}
	The barrier functions in Defs. \ref{def:bf2} and \ref{def:cbf2} can be seen as more general forms of the ones defined in \cite{Aaron2014}. In particular, it can be shown that, for each barrier function defined in \cite{Aaron2014}, we can always find one in the form in Def. \ref{def:bf2}.
\end{remark}

 \begin{definition}  \label{def:clf}
 	({\it Control Lyapunov function} \cite{Aaron2012}) A continuously differentiable function $V: \mathbb{R}^n\rightarrow \mathbb{R}$ is a globally and exponentially stabilizing control Lyapunov function (CLF) for system (\ref{eq:affine}) if there exist constants $c_1 >0, c_2>0, c_3>0$ such that
 	\begin{equation}
 	c_1||\bm x||^2 \leq V(\bm x) \leq c_2 ||\bm x||^2
 	\end{equation}
 	\begin{equation}\label{eqn:clf}
 	\underset{u\in U}{inf} \lbrack L_fV(\bm x)+L_gV(\bm x) \bm u + c_3V(\bm x)\rbrack \leq 0.
 	\end{equation}
 	for $\forall \bm x\in \mathbb{R}^n$.
 \end{definition}
 
 \begin{theorem} \label{thm:clf}
 Given an exponentially stabilizing CLF $V$ as in Def. \ref{def:clf}, any Lipschitz continuous controller $ \bm u \in K_{clf}(\bm x)$, with
 $$K_{clf}(\bm x) := \{\bm u\in U: L_fV(\bm x)+L_gV(\bm x) \bm u + c_3V(\bm x) \leq 0\},$$
 exponentially stabilizes system (\ref{eq:affine}) to its zero dynamics \cite{Aaron2012} (defined by the dynamics of the internal part if we transform the system to standard form and set the output to zero \cite{Khalil2002}).
 \end{theorem}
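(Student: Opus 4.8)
The plan is to follow the standard Lyapunov-comparison argument, adapted to account for the fact that $V$ is a \emph{function of the output} (or, more precisely, of the transverse/external part of the state) rather than of the full state. First I would transform system (\ref{eq:affine}) into its standard normal form, writing $\bm x$ as a pair $(\bm\eta, \bm\xi)$, where $\bm\xi$ collects the external coordinates on which $V$ depends and $\bm\eta$ collects the internal coordinates; the zero dynamics is then $\dot{\bm\eta} = f_0(\bm\eta, \bm 0)$, obtained by constraining $\bm\xi \equiv \bm 0$. Under this change of coordinates the sandwich inequality $c_1\|\bm x\|^2 \le V(\bm x) \le c_2\|\bm x\|^2$ becomes a sandwich inequality in $\|\bm\xi\|$, so $V$ is positive definite and radially unbounded with respect to the external part.

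Next I would fix any Lipschitz continuous controller $\bm u(\bm x) \in K_{clf}(\bm x)$ for all $\bm x$. By the definition of $K_{clf}$, along every closed-loop trajectory we have
\begin{equation}
\dot V(\bm x(t)) = L_fV(\bm x(t)) + L_gV(\bm x(t))\bm u(\bm x(t)) \le -c_3 V(\bm x(t)).
\end{equation}
Then I would invoke the comparison lemma (or equivalently multiply by the integrating factor $e^{c_3 t}$ and integrate) to conclude $V(\bm x(t)) \le V(\bm x(t_0)) e^{-c_3(t - t_0)}$. Combining this with $c_1\|\bm\xi(t)\|^2 \le V(\bm x(t))$ and $V(\bm x(t_0)) \le c_2\|\bm\xi(t_0)\|^2$ yields
\begin{equation}
\|\bm\xi(t)\| \le \sqrt{\tfrac{c_2}{c_1}}\,\|\bm\xi(t_0)\|\, e^{-\frac{c_3}{2}(t - t_0)},
\end{equation}
which is exactly exponential convergence of the external part to zero, i.e. exponential stabilization of the system to its zero dynamics. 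Since $K_{clf}(\bm x)$ is nonempty for every $\bm x$ by (\ref{eqn:clf}) and $\bm u$ is Lipschitz, the closed-loop vector field is locally Lipschitz and solutions are forward complete, so the estimate holds for all $t \ge t_0$.

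The main obstacle is not the Lyapunov estimate itself, which is routine, but making the notion ``stabilizes to the zero dynamics'' precise: one must argue that the normal-form decomposition exists (this needs a well-defined vector relative degree / a suitable involutivity condition on the dynamics, which is implicitly assumed here), that $V$ indeed depends only on the external coordinates $\bm\xi$ so that the quadratic bounds translate into bounds on $\|\bm\xi\|$, and that convergence $\bm\xi(t) \to \bm 0$ really does drive the trajectory onto the zero-dynamics manifold $\{\bm\xi = \bm 0\}$ in the appropriate (asymptotic) sense, since $V$ says nothing directly about the internal state $\bm\eta$. I would handle this by citing the standard normal-form construction from \cite{Khalil2002} and restricting the claim, as the theorem statement does, to convergence of the transverse dynamics, leaving the behavior of $\bm\eta$ governed by the (assumed well-posed) zero dynamics.
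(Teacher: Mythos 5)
The paper itself gives no proof of this theorem: it is quoted from \cite{Aaron2012}, where the argument is exactly the comparison-lemma computation you give ($\dot V \le -c_3 V$ along closed-loop trajectories, hence $V(\bm x(t)) \le V(\bm x(t_0))e^{-c_3(t-t_0)}$, then the quadratic bounds). Your proposal is correct and matches that standard proof, and your closing paragraph rightly identifies the only delicate point --- the bounds in Def.~\ref{def:clf} must be read on the transverse/output coordinates rather than the full state for ``stabilizes to the zero dynamics'' to be the right conclusion.
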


\begin{definition} \label{def:relative}
	({\it Relative degree})
	The relative degree of a continuously differentiable function $b:\mathbb{R}^n\rightarrow \mathbb{R}$ with respect to system (\ref{eq:affine}) is the number of times we need to differentiate it along the dynamics of (\ref{eq:affine}) until the control $\bm u$ explicitly shows. 
\end{definition}

In this paper, since function $b$ is used to define a constraint $b(\bm x)\geq 0$, we will also refer to the relative degree of $b$ as the relative degree of the constraint. 

Many existing works \cite{Aaron2014}, \cite{Lindemann2018}, \cite{Nguyen2016} combine CBF and CLF with quadratic costs to form optimization problems. Time is discretized and an optimization problem with constraints given by CBF and CLF is solved at each time step. Note that these constraints are linear in control since the state is 
fixed at the value at the beginning of the interval, and therefore the optimization problem is a quadratic program (QP). The optimal control obtained by solving the QP is applied 
at the current time step and held constant for the whole interval. The dynamics (\ref{eq:affine}) is updated, and the procedure is repeated. It is important to note that this method works conditioned upon the fact that the control input shows up in (\ref{eq:cbf2}), i.e., $L_gb(\bm x)\ne 0$.

\section{HIGH ORDER CONTROL BARRIER FUNCTIONS}
\label{sec:hocbf}

In this section, we define high order barrier functions (HOBF) and high order control barrier functions (HOCBF). We use a simple example to motivate the need for such functions and to illustrate the main ideas. 

\subsection{Example: Simplified Adaptive Cruise Control}
\label{sec:hocbf:moti}

Consider the simplified adaptive cruise control (SACC) problem\footnote{A more realistic version of this problem, called the adaptive cruise control problem (ACC), is defined in Sec.  \ref{sec:ACC}.} with the vehicle dynamics for vehicle $i\in S(t)$ (where $S(t)$ denotes the set of indices of vehicles in an urban area at time $t$) in the form:
\begin{equation} \label{eqn:simpledynamics}
\left[\begin{array}{c} 
\dot x_i(t)\\
\dot v_i(t)
\end{array} \right]=
\left[\begin{array}{c}  
v_i(t)\\
0
\end{array} \right] + 
\left[\begin{array}{c}  
0\\
1
\end{array} \right]u_i(t),
\end{equation}
where $x_i(t)$ and $v_i(t)$ denote the position and velocity of vehicle $i$ along its lane, respectively, and $u_i(t)$ is its control input.

Following \cite{Malikopoulos2018}, we require that the distance between vehicle $i\in S(t)$ and its immediately preceding vehicle $i_p\in S(t)$ 
(the coordinates $x_i$ and $x_{i_p}$ of vehicles $i$ and $i_p$, respectively, are measured from the same origin and $x_{i_p}(t) \geq x_i(t), \forall t\geq t_i^0$) 
be greater than a constant $\delta > 0$ for all the times, i.e.,
\begin{equation} \label{eqn:safety}
x_{i_p}(t) - x_i(t) \geq \delta,\forall t\geq t_i^0.
\end{equation}

Assume $i_p$ runs at constant speed $v_0$. In order to use CBF to find control input for $i$ such that the safety contraint (\ref{eqn:safety}) is always satisified, any control input $u_i(t)$ should satisfy
\begin{equation}\label{eqn:safety_ex2}
\begin{aligned}
\underbrace{v_0 - v_i(t)}_{L_fb(\bm x_i(t))} + \underbrace{0}_{L_gb(\bm x_i(t))}\times u_i(t) + \underbrace{x_{i_p}(t) - x_i(t) - \delta}_{\alpha(b(\bm x_i(t)))}\geq 0.
\end{aligned}
\end{equation}

Notice that $L_gb(\bm x_i(t)) = 0$ in (\ref{eqn:safety_ex2}), so the control input $u_i(t)$ does not show up. 
Therefore, we cannot use these barrier functions to formulate an optimization problem as described at the end of Sec. \ref{sec:prelim}. 

\subsection{High Order Barrier Function (HOBF)}
\label{sec:hobf}

As in \cite{Lindemann2018}, we consider a time-varying function to define an invariant set for system (\ref{eq:sys}). For a $m^{th}$ order differentiable function $b: \mathbb{R}^n \times [t_0, \infty) \rightarrow \mathbb{R}$ (where $t_0$ denotes the initial time), we define a series of functions $\psi_0: \mathbb{R}^n \times [t_0, \infty) \rightarrow \mathbb{R}, \psi_1: \mathbb{R}^n \times [t_0, \infty) \rightarrow \mathbb{R}, \psi_2: \mathbb{R}^n \times [t_0, \infty) \rightarrow \mathbb{R},\dots, \psi_{m}: \mathbb{R}^n \times [t_0, \infty) \rightarrow \mathbb{R}$ in the form:
\begin{equation} \label{eqn:functions}
\begin{aligned}
\psi_0(\bm x,t) :=& b(\bm x,t)\\
\psi_1(\bm x,t) :=& \dot \psi_0(\bm x,t) + \alpha_1(\psi_0(\bm x,t)),\\
\psi_2(\bm x,t) := &\dot \psi_1(\bm x,t) + \alpha_2(\psi_1(\bm x,t)),\\
&\vdots\\
\psi_m(\bm x,t) :=& \dot \psi_{m-1}(\bm x,t) + \alpha_m(\psi_{m-1}(\bm x,t)),
\end{aligned}
\end{equation}
where $\alpha_1(.),\alpha_2(.),\dots, \alpha_{m}(.)$ denote class $\mathcal{K}$ functions of their argument.

We further define a series of sets $C_1(t), C_2(t),\dots, C_{m}(t)$ associated with (\ref{eqn:functions}) in the form:
\begin{equation} \label{eqn:sets}
\begin{aligned}
C_1(t) := &\{\bm x \in \mathbb{R}^n: \psi_0(\bm x,t) \geq 0\}\\
C_2(t) := &\{\bm x \in \mathbb{R}^n: \psi_1(\bm x,t)  \geq 0\}\\
&\vdots\\
C_{m}(t) := &\{\bm x \in \mathbb{R}^n: \psi_{m-1}(\bm x,t) \geq 0\}
\end{aligned}
\end{equation}

\vspace{2ex}
\begin{definition} \label{def:hobf}
	Let $C_1(t), C_2(t),\dots, C_{m}(t)$ be defined by (\ref{eqn:sets}) and $\psi_1(\bm x,t), \psi_2(\bm x,t),\dots, \psi_{m}(\bm x,t)$ be defined by (\ref{eqn:functions}). A function $b: \mathbb{R}^n\times[t_0, \infty)\rightarrow \mathbb{R}$ is a high order barrier function (HOBF) that is $m^{th}$ order differentiable for system (\ref{eq:sys}) if there exist differentiable class $\mathcal{K}$ functions $\alpha_1,\alpha_2\dots \alpha_{m}$ such that
\begin{equation}\label{eqn:hocbf}
\psi_{m}(\bm x(t),t) \geq 0
\end{equation}
for all $(\bm x,t)\in C_1(t) \cap C_2(t)\cap,\dots, \cap C_{m}(t) \times [t_0, \infty)$.
\end{definition}

Note that $\dot \psi_{i}(\bm x,t) = \frac{d\psi_{i}(\bm x,t)}{dt} = \frac{\partial\psi_{i}(\bm x,t)}{\partial \bm x}\dot {\bm x}  + \frac{\partial\psi_{i}(\bm x,t)}{\partial t}$, $\forall i\in\{1,2,\dots,m-1\}$.

\vspace{2ex}
\begin{theorem} \label{thm:hobf}
	The set $C_1(t)\cap C_2(t)\cap,\dots, \cap C_{m}(t)$ is forward invariant for system (\ref{eq:sys}) if $b(\bm x(t),t)$ is a HOBF that is $m^{th}$ order differentiable.
\end{theorem}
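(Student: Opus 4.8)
The plan is to prove the claim by a backward induction along the chain of functions $\psi_0,\psi_1,\dots,\psi_{m-1}$ in (\ref{eqn:functions}), using the comparison result of Lemma \ref{lem:bf} at each step. First I would fix an arbitrary initial condition $\bm x(t_0)\in C_1(t_0)\cap C_2(t_0)\cap\dots\cap C_m(t_0)$; by (\ref{eqn:sets}) this is exactly the statement that $\psi_k(\bm x(t_0),t_0)\geq 0$ for every $k\in\{0,1,\dots,m-1\}$. Since solutions of (\ref{eq:sys}) are forward complete, $\bm x(\cdot)$ is defined on all of $[t_0,\infty)$, and it suffices to prove $\psi_k(\bm x(t),t)\geq 0$ for all such $k$ and all $t\geq t_0$: (\ref{eqn:sets}) then immediately gives $\bm x(t)\in C_1(t)\cap\dots\cap C_m(t)$ for all $t\geq t_0$, which is forward invariance.

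The descending chain is the core of the argument. Because $b$ is a HOBF, $\psi_m(\bm x(t),t)\geq 0$, and since $\psi_m=\dot\psi_{m-1}+\alpha_m(\psi_{m-1})$ this says $\dot\psi_{m-1}(\bm x(t),t)\geq-\alpha_m(\psi_{m-1}(\bm x(t),t))$. The hypothesis that $b$ is $m^{th}$ order differentiable (together with sufficient smoothness of the $\alpha_i$) makes each $\psi_k$ with $k\leq m-1$ continuously differentiable along the trajectory, so Lemma \ref{lem:bf} applies to $\psi_{m-1}$: with $\psi_{m-1}(\bm x(t_0),t_0)\geq 0$ it yields $\psi_{m-1}(\bm x(t),t)\geq 0$ for all $t\geq t_0$. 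Substituting this into $\psi_{m-1}=\dot\psi_{m-2}+\alpha_{m-1}(\psi_{m-2})\geq 0$ and using $\psi_{m-2}(\bm x(t_0),t_0)\geq 0$, Lemma \ref{lem:bf} again gives $\psi_{m-2}(\bm x(t),t)\geq 0$; iterating down through $\psi_{m-3},\dots,\psi_1,\psi_0$ produces $\psi_k(\bm x(t),t)\geq 0$ for every $k$ and all $t\geq t_0$, as required. (For $m=1$ this reduces to the standard barrier-function result, Theorem \ref{thm:bf2}.)

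The point that needs care --- and the main obstacle --- is that the HOBF inequality (\ref{eqn:hocbf}) is only assumed to hold at points of $C_1(t)\cap\dots\cap C_m(t)$, so the inequality $\dot\psi_{m-1}\geq-\alpha_m(\psi_{m-1})$ is available along the trajectory only while the trajectory has not yet left the intersection --- which is precisely what we are trying to establish. To break this circularity I would run the whole descending chain on the maximal interval $[t_0,\tau)$ on which $\bm x(\cdot)$ remains in $C_1(\cdot)\cap\dots\cap C_m(\cdot)$: on that interval (\ref{eqn:hocbf}) is valid along the trajectory, so the chain gives $\psi_k(\bm x(t),t)\geq 0$ on $[t_0,\tau)$ for all $k$, and continuity of the $\psi_k$ and of $\bm x(\cdot)$ extends these inequalities to $t=\tau$, so that $\bm x(\tau)$ again lies in the intersection; a standard maximal-interval/continuity argument then forces $\tau=\infty$. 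Equivalently, the set of times $t$ for which $\bm x(s)$ has stayed in the intersection for all $s\in[t_0,t]$ is nonempty and relatively closed in $[t_0,\infty)$, and one argues via the chain that it is also relatively open, hence all of $[t_0,\infty)$. This is the only genuinely new ingredient over the relative-degree-one case, and it is the part of the proof that should be written out in full.
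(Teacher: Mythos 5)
Your core argument is exactly the paper's: the published proof runs the same descending chain, reading $\psi_m(\bm x(t),t)\geq 0$ as $\dot\psi_{m-1}+\alpha_m(\psi_{m-1})\geq 0$, invoking Lemma \ref{lem:bf} with the initial condition $\psi_{m-1}(\bm x(t_0),t_0)\geq 0$ to conclude $\psi_{m-1}(\bm x(t),t)\geq 0$ for all $t\geq t_0$, and iterating down to $\psi_0$. Where you differ is in the final paragraph: the paper simply asserts that $\psi_m(\bm x(t),t)\geq 0$ holds for all $t\in[t_0,\infty)$ and cascades from there, without acknowledging that Definition \ref{def:hobf} only guarantees this inequality while the trajectory remains in $C_1(t)\cap\dots\cap C_m(t)$ --- precisely the circularity you flag. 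Your maximal-interval (closed-and-open) patch is not in the paper at all; it is a genuine strengthening rather than a restatement, and it is needed for the argument to be airtight as the definition is written (unless one reads the definition as imposing (\ref{eqn:hocbf}) along the entire trajectory unconditionally, which is how the authors implicitly use it). One small point to watch: Lemma \ref{lem:bf} as printed says $\dot b\geq\alpha(b)$, but both you and the paper use it in the form $\dot b\geq-\alpha(b)$, which is the correct statement of the comparison result from the cited reference; your usage is consistent with the intended lemma.
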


\begin{proof}
 If $b(\bm x(t),t)$ is a HOBF that is $m^{th}$ order differentiable, then $\psi_{m}(\bm x(t),t) \geq 0$ for $\forall t\in[t_0,\infty)$, i.e., $\dot \psi_{m-1}(\bm x(t),t) + \alpha_m(\psi_{m-1}(\bm x(t),t)) \geq 0$. By Lemma \ref{lem:bf}, since $\bm x(t_0) \in C_{m}(t_0)$ (i.e., $\psi_{m-1}(\bm x(t_0),t_0))\geq 0$, and $\psi_{m-1}(\bm x(t),t)$ is an explicit form of $\psi_{m-1}(t)$), then $\psi_{m-1}(\bm x(t),t)) \geq 0$, $\forall t\in[t_0,\infty)$, i.e., $\dot \psi_{m-2}(\bm x(t),t) + \alpha_{m-1}(\psi_{m-2}(\bm x(t),t)) \geq 0$. Again, by Lemma 1, since $\bm x(t_0) \in C_{m-1}(t_0)$, we also have $\psi_{m-2}(\bm x(t),t)) \geq 0$, $\forall t\in[t_0,\infty)$. Iteratively, we can get $\bm x(t) \in C_i(t)$, $\forall i \in\{1,2,\dots,m\}, \forall t\in[t_0,\infty)$. Therefore, the sets $C_1(t), C_2(t)\dots C_{m}(t)$ are forward invariant.
\end{proof}

\begin{remark} \label{rem:intersect}
	 The sets $C_1(t), C_2(t),\dots, C_{m}(t)$ should have a non-empty intersection at $t_i^0$ in order to satisfy the forward invariance condition starting from $t_i^0$ in Thm. \ref{thm:hobf}. If $b(\bm x(t_0), t_0)\geq 0$, we can always choose proper class $\mathcal{K}$ functions $\alpha_1(.),\alpha_2(.),\dots, \alpha_{m}(.)$ to make $\psi_1(\bm x(t_0),t_0) \geq 0, \psi_2(\bm x(t_0),t_0)\geq 0,\dots, \psi_{m-1}(\bm x(t_0),t_0)\geq 0$. There are some extreme cases, however, when this is not possible. For example, if $\psi_0(\bm x(t_0), t_0) = 0$ and $\dot \psi_0(\bm x(t_0), t_0) < 0$, then $\psi_1(\bm x(t_0), t_0)$ is always negative no matter how we choose $\alpha_1(\cdot)$. Similarly, if $\psi_0(\bm x(t_0), t_0) = 0$, $\dot \psi_0(\bm x(t_0), t_0) = 0$ and $\dot \psi_1(\bm x(t_0), t_0) < 0$, $\psi_2(\bm x(t_0), t_0)$ is also always negative, etc.. To deal with such extreme cases (as with the case when $b(\bm x(t_0),t_0) < 0$), we would need a feasibility enforcement method, which is beyond the scope of this paper.
\end{remark}

\subsection{High Order Control Barrier Function (HOCBF)}
\label{sec:hocbf_sub}

\vspace{2ex}
\begin{definition} \label{def:hocbf}
	Let $C_1(t), C_2(t),\dots, C_{m}(t)$ be defined by (\ref{eqn:sets}) and $\psi_1(\bm x,t), \psi_2(\bm x,t),\dots, \psi_{m}(\bm x,t)$ be defined by (\ref{eqn:functions}). A function $b: \mathbb{R}^n\times[t_0, \infty)\rightarrow \mathbb{R}$ is a high order control barrier function (HOCBF) of relative degree $m$ for system (\ref{eq:affine}) if there exist differentiable class $\mathcal{K}$ functions $\alpha_1,\alpha_2,\dots, \alpha_{m}$ such that
\begin{equation}\label{eqn:constraint}
\begin{aligned}
L_f^{m}b(\bm x,t) + L_gL_f^{m-1}b(\bm x,t)\bm u + \frac{\partial^{m}b(\bm x,t)}{\partial t^m}\\+ O(b(\bm x,t)) + \alpha_m(\psi_{m-1}(\bm x,t)) \geq 0,
\end{aligned}
\end{equation}
for all $(\bm x,t)\in C_1(t) \cap C_2(t)\cap,\dots, \cap C_{m}(t) \times [t_0, \infty)$. In the above equation, $O(.)$ denotes the remaining Lie derivatives along $f$ and partial derivatives with respect to $t$ with degree less than or equal to $m-1$.
\end{definition}

Given a HOCBF $b$, we define the set of all control values that satisfy (\ref{eqn:constraint}) as:
\begin{equation}
\begin{aligned}
K_{hocbf} = \{\bm u\in U: L_f^{m}b(\bm x,t) + L_gL_f^{m-1}b(\bm x,t)\bm u \\+ \frac{\partial^{m}b(\bm x,t)}{\partial t^m} + O(b(\bm x,t))  + \alpha_m(\psi_{m-1}(\bm x,t)) \geq 0\}
\end{aligned}
\end{equation}

\vspace{2ex}
\begin{theorem} \label{cor:hocbf}
	 Given a HOCBF $b(\bm x, t)$ from Def. \ref{def:hocbf} with the associated sets $C_1(t), C_2(t),\dots, C_{m}(t)$ defined by (\ref{eqn:sets}), if $\bm x(t_0) \in C_1(t_0) \cap C_2(t_0)\cap,\dots,\cap C_{m}(t_0)$, then any Lipschitz continuous controller $\bm u(t)\in K_{hocbf}$ renders the set
	 $C_1(t)\cap C_2(t)\cap,\dots, \cap C_{m}(t)$ forward invariant for system (\ref{eq:affine}).
\end{theorem}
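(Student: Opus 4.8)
The plan is to reduce the claim to the iterative argument already used in the proof of Theorem~\ref{thm:hobf}, the only new ingredient being that, once a controller is substituted, the HOCBF constraint~(\ref{eqn:constraint}) becomes exactly the statement $\psi_m(\bm x,t)\ge 0$ along the closed-loop trajectories. First I would make this identification precise. By the relative-degree-$m$ assumption, the control $\bm u$ does not appear in any of $\psi_0,\psi_1,\dots,\psi_{m-1}$ (it first shows up upon the $m$-th differentiation of $b$), so differentiating $\psi_{m-1}$ once more along~(\ref{eq:affine}) via $\dot\psi_{m-1}=\frac{\partial\psi_{m-1}}{\partial\bm x}\big(f(\bm x)+g(\bm x)\bm u\big)+\frac{\partial\psi_{m-1}}{\partial t}$ yields a term $L_gL_f^{m-1}b(\bm x,t)\bm u$ that is affine in $\bm u$, plus drift terms $L_f^{m}b(\bm x,t)+\frac{\partial^{m}b(\bm x,t)}{\partial t^m}+O(b(\bm x,t))$, where $O(\cdot)$ collects the remaining lower-order Lie and time derivatives, including those produced by differentiating the compositions $\alpha_i(\psi_{i-1})$. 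Adding $\alpha_m(\psi_{m-1}(\bm x,t))$ shows that $\psi_m(\bm x,t)$ coincides with the left-hand side of~(\ref{eqn:constraint}); hence $\bm u\in K_{hocbf}$ is equivalent to $\psi_m(\bm x,t)\ge 0$.

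Next, I would fix a Lipschitz continuous controller $\bm u(t)\in K_{hocbf}$ and let $\bm x(t)$, $t\ge t_0$, be the (forward complete) solution of the closed-loop system with $\bm x(t_0)\in C_1(t_0)\cap\dots\cap C_m(t_0)$. Along this trajectory $\psi_m(\bm x(t),t)\ge 0$, i.e. $\dot\psi_{m-1}(\bm x(t),t)+\alpha_m(\psi_{m-1}(\bm x(t),t))\ge 0$, for all $t\ge t_0$. Since $t\mapsto\psi_{m-1}(\bm x(t),t)$ is continuously differentiable (the differentiability hypotheses on $b$ and on the $\alpha_i$ in Def.~\ref{def:hocbf}, together with $\bm x(\cdot)\in C^1$, guarantee this) and $\psi_{m-1}(\bm x(t_0),t_0)\ge 0$ because $\bm x(t_0)\in C_m(t_0)$, Lemma~\ref{lem:bf} gives $\psi_{m-1}(\bm x(t),t)\ge 0$, i.e. $\bm x(t)\in C_m(t)$, for all $t\ge t_0$.

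I would then iterate this step downward exactly as in Theorem~\ref{thm:hobf}: from $\psi_{m-1}(\bm x(t),t)\ge 0$ we get $\dot\psi_{m-2}(\bm x(t),t)+\alpha_{m-1}(\psi_{m-2}(\bm x(t),t))\ge 0$, and since $\psi_{m-2}(\bm x(t_0),t_0)\ge 0$ (as $\bm x(t_0)\in C_{m-1}(t_0)$), Lemma~\ref{lem:bf} yields $\psi_{m-2}(\bm x(t),t)\ge 0$, i.e. $\bm x(t)\in C_{m-1}(t)$; continuing down to $\psi_0=b$ shows $\bm x(t)\in C_i(t)$ for every $i\in\{1,\dots,m\}$ and every $t\ge t_0$, hence $\bm x(t)\in C_1(t)\cap\dots\cap C_m(t)$, which is the asserted forward invariance.

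The only step that is not a routine repetition of earlier arguments is the first one: carefully verifying that expanding the recursion defining $\psi_m$ along~(\ref{eq:affine}) reproduces precisely the left-hand side of~(\ref{eqn:constraint}) --- in particular that the relative-degree-$m$ hypothesis forces $\bm u$ to enter only through the single affine term $L_gL_f^{m-1}b\,\bm u$, and that every other contribution (mixed $\bm x$/$t$ partials and the derivatives of the class $\mathcal{K}$ compositions) can legitimately be lumped into $O(b)$. Once that bookkeeping is settled, the remainder is the induction of Theorem~\ref{thm:hobf} applied to the closed-loop system in place of~(\ref{eq:sys}).
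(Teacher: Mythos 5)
Your proposal is correct and takes essentially the same route as the paper: identify the left-hand side of (\ref{eqn:constraint}) with $\psi_m(\bm x,t)$ under the closed-loop dynamics, note that $\psi_0,\dots,\psi_{m-1}$ remain continuously differentiable so Lemma~\ref{lem:bf} applies, and then run the downward induction of Theorem~\ref{thm:hobf}. Your write-up is in fact more explicit than the paper's (which simply asserts the reduction to Theorem~\ref{thm:hobf}), particularly in verifying that the relative-degree hypothesis confines $\bm u$ to the single affine term, but there is no substantive difference in approach.
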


\begin{proof}
	Since $\bm u(t)$ is Lipschitz continuous and $\bm u(t)$ only shows up in the last equation of (\ref{eqn:functions}) when we take Lie derivative on (\ref{eqn:functions}), we have that $\psi_{m}(\bm x, t)$ is also Lipschitz continuous. The system states in (\ref{eq:affine}) are all continuously differentiable, so $\psi_{1}(\bm x, t), \psi_{2}(\bm x, t), \dots, \psi_{m-1}(\bm x, t)$ are also continuously differentiable. Therefore, the HOCBF has the same property as the HOBF in Def. \ref{def:hobf}, and the proof is the same as the one for Theorem \ref{thm:hobf}.
	\end{proof}

Note that, if we have a constraint $b(\bm x, t)\geq 0$ with relative degree $m$, then the number of sets is also $m$.

\vspace{2ex}
\begin{remark} \label{rem:var}
The general, time-varying HOCBF 
introduced in Def. \ref{def:hocbf}, can be used for general, time-varying 
constraints (e.g., signal temporal logic specifications \cite{Lindemann2018}) and systems.  However, the ACC problem (we will consider in this paper) has time-invariant system dynamics and constraints. Therefore, in the rest of this paper, we focus on time-invariant versions for simplicity. 
\end{remark}

\begin{remark} \label{rem:relation}
	({\it Relationship between time-invariant HOCBF and exponential CBF in }\cite{Nguyen2016}) In Def. \ref{def:hobf}, if we set class $\mathcal{K}$ functions $\alpha_1,\alpha_2\dots \alpha_{m}$ to be linear functions with positive coefficients, then we can get exactly the same formulation as in \cite{Nguyen2016} that is obtained through input-output linearization. i.e., 
\begin{equation} \label{eqn:functionslinear}
\begin{aligned}
\psi_1(\bm x) :=& \dot b(\bm x) + k_1b(\bm x)\\
\psi_2(\bm x) := &\dot \psi_1(\bm x) + k_2\psi_1(\bm x)\\
&\vdots\\
\psi_m(\bm x) :=& \dot \psi_{m-1}(\bm x) + k_m\psi_{m-1}(\bm x)
\end{aligned}
\end{equation}
where $k_1 > 0, k_2 > 0,\dots,k_m > 0$. The time-invariant HOCBF is the generalization of exponential CBF.
\end{remark}

\vspace{2ex}
$\textbf{Example revisited.}$ 
For the SACC problem introduced in Sec.\ref{sec:hocbf:moti}, the relative degree of the constraint from Eqn. (\ref{eqn:safety}) is 2. Therefore, we need a HOCBF with $m = 2$.

We choose quadratic class $\mathcal{K}$ functions for both $\alpha_1(\cdot)$ and $\alpha_2(\cdot)$, i.e.,  $\alpha_1(b(\bm x_i(t))) = b^2(\bm x_i(t))$ and $\alpha_2(\psi_1(\bm x_i(t))) = \psi_1^2(\bm x_i(t))$. In order for $b(\bm x_i(t)) := x_{i_p}(t) - x_i(t) - \delta$ to be a HOCBF for (\ref{eqn:simpledynamics}), it should satisfy the following constraint:
\begin{equation}
\begin{aligned}
\ddot b(\bm x_i(t)) + 2\dot b(\bm x_i(t))b(\bm x_i(t)) + \dot b^2(\bm x_i(t)) \\+ 2\dot b(\bm x_i(t)) b^2(\bm x_i(t)) + b^4(\bm x_i(t)) \geq 0.
\end{aligned}
\end{equation}

A control input $u(t)$ should satisfy
\begin{equation} \label{eqn:safety_ex3}
\begin{aligned}
L_f^2 b(\bm x_i(t)) + L_gL_f b(\bm x_i(t))u_i(t) + 2b(\bm x_i(t))L_f b(\bm x_i(t))\\ + (L_f b(\bm x_i(t)))^2 + 2b^2(\bm x_i(t))L_f b(\bm x_i(t)) + b^4(\bm x_i(t)) \geq 0.
\end{aligned}
\end{equation}

Note that $L_gL_f b(\bm x_i(t),t)\ne 0$ in (\ref{eqn:safety_ex3}) and the initial conditions are $ b(\bm x_i(t_i^0)) \geq 0$ and $\dot b(\bm x_i(t_i^0)) + b^2(\bm x_i(t_i^0)) \geq 0$.

\subsection{Optimal Control for Time-Invariant Constraints}
\label{sec:oc}

 Consider an optimal control problem for system (\ref{eq:affine}) with the cost defined as:
\begin{equation}\label{eqn:cost}
J(\bm u(t)) = \int_{t_0}^{t_f}\mathcal{C}(||\bm u(t)||)dt
\end{equation}
where $||\cdot||$ denotes the 2-norm of a vector. $t_0,t_f$ denote the initial and final times, respectively, and  $\mathcal{C}(\cdot)$ is a strictly increasing function of its argument. Assume a time-invariant (safety) constraint $b(\bm x) \geq 0$ with relative degree $m$ has to be satisfied by system (\ref{eq:affine}). Then the control input $\bm u$ should satisfy the time-invariant HOCBF version of the constraint from (\ref{eqn:constraint})):
\begin{equation}\label{eqn:constraintOC}
\begin{aligned}
L_f^{m}b(\bm x) + L_gL_f^{m-1}b(\bm x)\bm u + O(b(\bm x)) + \alpha_m(\psi_{m-1}(\bm x)) \geq 0
\end{aligned}
\end{equation}
with $\bm x(t_0)\in C_1(t_0)\cap C_2(t_0)\cap,\dots, \cap C_{m}(t_0)$.

If convergence to a given state is required in addition to optimality and safety, then, as 
in \cite{Aaron2014}, HOCBF can be combined with CLF. We discretize the time and formulate a cost (\ref{eqn:cost}) while subjecting to the HOCBF constraint (\ref{eqn:constraintOC}) and CLF constraint (\ref{eqn:clf}) at each time. With the optimal control input $\bm u$ obtained from (\ref{eqn:cost}) subject to (\ref{eqn:constraintOC}), (\ref{eqn:clf}) and (\ref{eq:affine}) at each time instant, we update the system dynamics (\ref{eq:affine}) for each time step, and the procedure is repeated. Then $C_1(t)\cap C_2(t)\cap,\dots, \cap C_m(t)$ is forward invariant, i.e., the safety constraint $b(\bm x) \geq 0$ is satisfied for (\ref{eq:affine}), $\forall t\in[t_0, t_f]$.

\subsection{Time-invariant HOCBF Properties}
\label{sec:property}
In this section, we consider how we should properly choose class $\mathcal{K}$ functions $\alpha_1, \alpha_2, \dots, \alpha_m$ for a time-invariant HOCBF such that the performance of system (\ref{eq:affine}) and the feasibility of the optimal control problem defined in Sec. \ref{sec:oc} are improved. For simplicity, in this section we assume that the term $L_gL_f^{m-1}b(\bm x(t))$ in (\ref{eqn:constraintOC}) does not change sign for all $t\in[t_0, t_f]$.

\subsubsection{\textbf{Feasible Region of Control Input}} For an optimal control problem as defined in Sec. \ref{sec:oc}, we rewrite the time-invariant HOCBF constraint (\ref{eqn:constraintOC}) as
\begin{equation}\label{eqn:hocbfcontrol}
 \bm u  \leq \frac{ L_f^{m}b(\bm x) + O(b(\bm x)) + \alpha_m(\psi_{m-1}(\bm x))}{-L_gL_f^{m-1}b(\bm x)}
\end{equation}
if $L_gL_f^{m-1}b(\bm x) < 0$ (otherwise, the inequality in (\ref{eqn:hocbfcontrol}) should be $\geq$).

Suppose we also have control limitations 
\begin{equation} \label{eqn:controllimit}
\bm u_{min}\leq\bm u(t) \leq \bm u_{max}, \forall t\in [t_0,t_f]
\end{equation}
for system (\ref{eq:affine}), where $\bm u_{min},\bm u_{max}\in \mathbb{R}^q$. If $\frac{ L_f^{m}b(\bm x(t_0)) + O(b(\bm x(t_0))) + \alpha_m(\psi_{m-1}(\bm x(t_0)))}{-L_gL_f^{m-1}b(\bm x(t_0))} < \bm u_{max}$ (or $ > \bm u_{min}$ if $L_gL_f^{m-1}b(\bm x) > 0$), then (\ref{eqn:hocbfcontrol}) is active at $t_0$ and may remain active for all $t\geq t_0$ such that $\bm u$ cannot take the value $\bm u_{max}$. This implies that the system performance is reduced by the HOCBF constraint   (\ref{eqn:hocbfcontrol}).

We want (\ref{eqn:hocbfcontrol}) to be active when $b(\bm x)$ is close to 0. The terms $L_f^{m}b(\bm x)$ and $L_gL_f^{m-1}b(\bm x(t))$ in  (\ref{eqn:hocbfcontrol}) depend only on the safety constraint itself and system (\ref{eq:affine}) (not affected by the definition of HOCBF), and the term $O(b(\bm x))$ (could be positive or negative) depends on the safety constraint, system (\ref{eq:affine}) and the derivatives of class $\mathcal{K}$ functions $\alpha_1,\alpha_2,\dots, \alpha_{m-1}$, while $\alpha_m(\psi_{m-1}(\bm x)) > 0$ depends heavily on these class $\mathcal{K}$ functions and it takes big positive values when $b(\bm x) >> 1, \psi_1(\bm x)>> 1, \dots, \psi_{m-1}(\bm x) >> 1$ such that $\alpha_m(\psi_{m-1}(\bm x)) + O(b(\bm x))$ can also take big positive values. Thus, if these class $\mathcal{K}$ functions are high order polynomial functions, the right hand side of (\ref{eqn:hocbfcontrol}) tends to be bigger (or smaller if $L_gL_f^{m-1}b(\bm x) > 0$) compared with the low order polynomial functions when $b(\bm x) >> 1, \psi_1(\bm x)>> 1, \dots, \psi_{m-1}(\bm x) >> 1$. In other words, the feasible region for $\bm u$ is larger under high order polynomial class $\mathcal{K}$ functions. However, the right hand side of (\ref{eqn:hocbfcontrol}) may be smaller (usually negative) in high order polynomial functions than low order polynomial functions when all of $b(\bm x), \psi_1(\bm x), \dots, \psi_{m-1}(\bm x)$ become small. 

\begin{remark}  \label{rem:feasible}
	The significance of larger feasible region for $\bm u$ lies in the fact that an optimal control problem will not be over-constrained by the time-invariant HOCBF constraint (\ref{eqn:hocbfcontrol}). If a problem is over-constrained, the system performance is reduced. The HOCBF may decrease faster to zero under low order polynomial class $\mathcal{K}$ functions than high order ones since (\ref{eqn:hocbfcontrol}) is less restrictive on $\bm u$ when the HOCBF is close to zero. We will illustrate these properties in the ACC case study in Sec. \ref{sec:ACC}.
\end{remark}

\subsubsection{\textbf{Conflict between Control Input Limitation and HOCBF Constraint} (\ref{eqn:hocbfcontrol})} 
 The constraint (\ref{eqn:hocbfcontrol}) may conflict with $\bm u_{min}$ in (\ref{eqn:controllimit}) (or $\bm u_{max}$ if $L_gL_f^{m-1}b(\bm x) > 0$). If this happens, the optimal control problem becomes infeasible. For the ACC problem defined in   \cite{Aaron2014}, this conflict is addressed by considering the minimum braking distance, which results in another complex safety constraint. 
 
 However, we may need to approximate the minimum braking distance with this method when we have non-linear dynamics and a cooperative optimization control problem \cite{Wei2019}. This conflict is hard to address for high-dimensional systems. Here, we discuss how we may deal with this conflict using the HOCBF introduced in this paper.

When (\ref{eqn:hocbfcontrol}) becomes active, its right hand side should be large enough such that (\ref{eqn:hocbfcontrol}) does not conflict with $\bm u_{min}$. Instead of choosing low order polynomial class $\mathcal{K}$ functions, which conflict with the recommendation from the previous subsection, we add penalties $p_1 > 0, p_2 > 0,\dots, p_m > 0$: 
\begin{equation} \label{eqn:functions_ex}
\begin{aligned}
\psi_1(\bm x) :=& \dot b(\bm x) + p_1\alpha_1(b(\bm x))\\
\psi_2(\bm x) := &\dot \psi_1(\bm x) + p_2\alpha_2(\psi_1(\bm x))\\
&\vdots\\
\psi_m(\bm x) :=& \dot \psi_{m-1}(\bm x) + p_m\alpha_m(\psi_{m-1}(\bm x))
\end{aligned}
\end{equation}

\begin{remark} \label{rem:penalty}
	The penalties $p_1, p_2,\dots, p_m$ also limit the feasible region of $\bm u$ as the class $\mathcal{K}$ functions, but this limitation is weak when $b(\bm x) >> 1, \psi_1(\bm x)>> 1, \dots, \psi_{m-1}(\bm x) >> 1$ such that the idea from the previous subsection may still work to improve the system performance. This is helpful when we want to make the HOCBF constraint (\ref{eqn:hocbfcontrol}) comply with the control limitation by choosing small enough $p_1, p_2,\dots, p_m$ , but the initial conditions should also be satisfied, i.e., $\bm x(t_0) \in C_1(t_0)\cap C_2(t_0)\cap,\dots,\cap C_{m}(t_0)$.
\end{remark}

\section{ACC PROBLEM FORMULATION}
\label{sec:ACC}

In this section, we consider a more realistic version of the adaptive cruise control (ACC) problem introduced in 
Sec.\ref{sec:hocbf:moti}, which was referred to as the simplified adaptive cruise control (SACC) problem. we consider that the safety constraint is critical and study the properties of HOCBF discussed in Sec.\ref{sec:property}.

\subsection{Vehicle Dynamics}

Recall that $S(t)$ denotes the set of vehicle indices in an urban area at time $t$. Instead of using the simple dynamics in (\ref{eqn:simpledynamics}), we consider more accurate vehicle dynamics for $i\in S(t)$ in the form:
\begin{equation}\label{eqn:veh}
m_i\dot v_i(t) = u_i(t) - F_{r}(v_i(t))
\end{equation}
where $u_i(t)$ denotes the control input of vehicle $i$, $m_i$ denotes its mass, $v_i(t)$ denotes its velocity. $F_{r}(v_i(t))$ denotes the resistance force, which is expressed \cite{Khalil2002} as:
\begin{equation}\label{eqn:resistence}
F_{r}(v_i(t)) = f_0sgn(v_i(t)) + f_1v_i(t) + f_2 v_i^2(t),
\end{equation}
where $f_0 > 0, f_1 > 0$ and $f_2 > 0$ are scalars determined empirically. The first term in $F_{r}(v(t))$ denotes the coulomb friction force, the second term denotes the viscous friction force and the last term denotes the aerodynamic drag.

With $\bm x_i(t) := (x_i(t),v_i(t))$, we rewrite the dynamics as:
\begin{equation}\label{eqn:vehicle}
\underbrace{\left[\begin{array}{c} 
	\dot x_i(t)\\
	\dot v_i(t)
	\end{array} \right]}_{ \dot {\bm x_i}(t)}=
\underbrace{\left[\begin{array}{c}  
	v_i(t)\\
	-\frac{1}{m_i}F_{r}(v_i(t))
	\end{array} \right]}_{f(\bm x_i(t))} + 
\underbrace{\left[\begin{array}{c}  
	0\\
	\frac{1}{m_i}
	\end{array} \right]}_{g(\bm x_i(t))}u_i(t)
\end{equation}
where $x_i(t)$ denotes the position in the lane.

$\mathbf{Constraint 1}$ (Vehicle limitations): There are constraints
on the speed and acceleration for each $i\in S(t)$, i.e.,
\begin{equation}\label{eqn:limitation}%
\begin{aligned} v_{min} \leq v_i(t)\leq v_{max}, \forall t\in[t_i^0,t_i^f],\\ -c_dm_ig\leq u_i(t)\leq c_am_ig, \forall t\in[t_i^0,t_i^f], \end{aligned} 
\end{equation}
where $t_{i}^{0},t_{i}^{f}$ denote the time instants that $i_p$ (recall that $i_{p}$ denotes the index of the vehicle which immediately precedes $i$ - if one is
present) precedes $i$ and $i_p$ no longer precedes $i$, respectively. $v_{max}>0$ and $v_{min}\geq 0$ denote the maximum and minimum allowed speeds, while $c_d>0$ and $c_a > 0$ are deceleration and
acceleration coefficients (we use the form (\ref{eqn:limitation}) instead of (\ref{eqn:controllimit}) to note that the maximum and minimum control inputs depend on $m_i$), respectively, and $g$ is the gravity constant.

$\mathbf{Constraint 2}$ (Safety constraint): We require that the distance $z_{i,i_{p}}(t):=x_{i_{p}}(t)-x_{i}(t)$ satisfy
\begin{equation}\label{eqn:safetyACC}%
z_{i,i_{p}}(t)\geq \delta,\text{ \ }\forall t\in\lbrack
t_{i}^{0},t_{i}^{f}], 
\end{equation}
where $\delta>0$ is determined by the length of the two vehicles (generally dependent on $i$ and $i_{p}$ but taken to be a constant over all vehicles for simplicity).

$\mathbf{Objective1}$ (Desired Speed): The vehicle $i\in S(t)$ always attempts to achieve a desired speed $v_d$.

$\mathbf{Objective2}$ (Minimum Energy Consumption): We also want to minimize the energy consumption:
\begin{equation}\label{eqn:energy}
J_{i}(u_{i}(t))=\int_{t_{i}^{0}}^{t_{i}^{f}}\left(\frac{u_i(t) - F_{r}(v_i(t))}{m_i}\right)^2dt,
\end{equation}

\begin{problem}\label{problem1}
	Determine control laws to achieve Objectives 1, 2 subject to Constraints 1, 2, for each vehicle $i\in S(t)$ governed by dynamics (\ref{eqn:vehicle}).
\end{problem}
\vspace{2ex}

We use the HOCBF method to impose Constraints 1 and 2 on control input and a control Lyapunov function \cite{Aaron2012} to achieve Objective 1. We capture Objective 2 in the cost of the optimization problem.

\section{ACC PROBLEM REFORMULATION}
\label{sec:reform}

For {\it\textbf{Problem} 1}, we use the quadratic program (QP) - based method introduced in \cite{Aaron2014}. We  consider three different types of class $\mathcal{K}$ functions (square root, linear and quadratic functions) to define a HOCBF for Constraint 2.

\subsection{Desired Speed (Objective 1)} We use a  control Lyapunov function to stabilize $v_i(t)$ to $v_d$ and relax the corresponding constraint (\ref{eqn:clf}) to make it a soft constraint \cite{Aaron2012}. Consider a Lyapunov function $V_{acc}(\bm x_i(t)):= (v_i(t) - v_d)^2$, with $c_1 = c_2 = 1$ and $c_3 = \epsilon > 0$ in Def. \ref{def:clf}. 
Any control input $u_i(t)$ should satisfy
\begin{equation}\label{eqn:clfconstraint}%
\begin{aligned}
\underbrace{-\frac{2(v_{i}(t) - v_d)}{m_i}F_r(v_i(t))}_{L_fV_{acc}(\bm x_i(t))} + \underbrace{\epsilon(v_i(t) - v_d)^2}_{\epsilon V_{acc}(\bm x_i(t))} \\+ \underbrace{\frac{2(v_{i}(t) - v_d)}{m_i}}_{L_gV_{acc}(\bm x_i(t))}u_i(t) \leq \delta_{acc}(t)
\end{aligned}
\end{equation}
$\forall t\in[t_i^0,t_i^f]$. Here $\delta_{acc}(t)$ denotes a relaxation variable that makes (\ref{eqn:clfconstraint}) a soft constraint.

\subsection{Vehicle Limitations (Constraint 1)} Since the relative degrees of speed limitations are 1, we use HOCBFs with $m = 1$ to map the limitations from speed $v_i(t)$ to control input $u_i(t)$. Let $b_{i,1}(\bm x_i(t)) := v_{max} - v_i(t)$, $b_{i,2}(\bm x_i(t)) := v_i(t) - v_{min}$ and choose $\alpha_1(b_{1,q}) = b_{1,q}, q\in\{1,2\}$ in Def. \ref{def:hocbf} for both HOCBFs. Then any control input $u_i(t)$ should satisfy
\begin{small}
	\begin{equation}\label{eqn:MaxSpeedConstraint}%
	\begin{aligned}
	\underbrace{\frac{F_r(v_i(t))}{m_i}}_{L_fb_{i,1}(\bm x_i(t))}\! +\! \underbrace{\frac{-1}{m_i}}_{L_gb_{i,1}(\bm x_i(t))}u_i(t)\! + \underbrace{v_{max} - v_i(t)}_{b_{i,1}(\bm x_i(t))} \geq\! 0,
	\end{aligned}
	\end{equation}
\end{small}
\begin{small}
	\begin{equation}\label{eqn:MinSpeedConstraint}%
	\begin{aligned}
	\underbrace{\frac{-F_r(v_i(t))}{m_i}}_{L_fb_{i,2}(\bm x_i(t))}\! +\! \underbrace{\frac{1}{m_i}}_{L_gb_{i,2}(\bm x_i(t))}u_i(t)\! + \underbrace{v_i(t) - v_{min}}_{b_{i,2}(\bm x_i(t))} \geq\! 0.
	\end{aligned}
	\end{equation}
\end{small}

Since the control limitations are already constraints on control input, we do not need HOCBFs for them.

\subsection{Safety Constraint (Constraint 2)} The relative degree of the safety constraint (\ref{eqn:safetyACC}) is two. Therefore, we need to define a HOCBF with $m = 2$. Let $b(\bm x_i(t)) := z_{i,i_p}(t) - \delta$. We consider three different forms of class $\mathcal{K}$ functions in Def. \ref{def:hocbf} (with a penalty $p > 0$ on both $\alpha_1, \alpha_2$ for all forms like (\ref{eqn:functions_ex}):

\textbf{Form 1: $\alpha_1$ is linear,  $\alpha_2$ is square root}:
\begin{equation} \label{eqn:sqrt}
\begin{aligned}
\psi_1(\bm x_i(t)) :=& \dot b(\bm x_i(t)) + pb(\bm x_i(t))\\
\psi_2(\bm x_i(t)) := &\dot \psi_1(\bm x_i(t)) + p\sqrt{\psi_1(\bm x_i(t))}
\end{aligned}
\end{equation}

Combining the dynamics (\ref{eqn:vehicle}) with (\ref{eqn:sqrt}), any control input  $u_i(t)$ should satisfy
\begin{equation} \label{eqn:sqrtconstraint}
\begin{aligned}
\underbrace{\frac{F_r(v_i(t))}{m_i}}_{L_f^2b(\bm x_i(t))}\! +\! \underbrace{\frac{-1}{m_i}}_{L_gL_fb(\bm x_i(t))}u_i(t)\! + p\dot b(\bm x_i(t)) \\+  p\sqrt{\dot b(\bm x_i(t)) + pb(\bm x_i(t))}\geq\! 0.
\end{aligned}
\end{equation}

\textbf{Form 2: Both $\alpha_1$ and $\alpha_2$ are linear}:
\begin{equation} \label{eqn:linear}
\begin{aligned}
\psi_1(\bm x_i(t)) :=& \dot b(\bm x_i(t)) + pb(\bm x_i(t))\\
\psi_2(\bm x_i(t)) := &\dot \psi_1(\bm x_i(t)) + p\psi_1(\bm x_i(t))
\end{aligned}
\end{equation}

Combining the dynamics (\ref{eqn:vehicle}) with (\ref{eqn:linear}), any control input  $u_i(t)$ should satisfy
\begin{equation} \label{eqn:linearconstraint}
\begin{aligned}
\underbrace{\frac{F_r(v_i(t))}{m_i}}_{L_f^2b(\bm x_i(t))}\! +\! \underbrace{\frac{-1}{m_i}}_{L_gL_fb(\bm x_i(t))}u_i(t)\! + 2p\dot b(\bm x_i(t)) \\+  p^2b(\bm x_i(t))\geq\! 0.
\end{aligned}
\end{equation}

\textbf{Form 3: Both $\alpha_1$ and $\alpha_2$ are quadratic}:
\begin{equation} \label{eqn:quadratic}
\begin{aligned}
\psi_1(\bm x_i(t)) :=& \dot b(\bm x_i(t)) + pb^2(\bm x_i(t))\\
\psi_2(\bm x_i(t)) := &\dot \psi_1(\bm x_i(t)) + p\psi_1^2(\bm x_i(t))
\end{aligned}
\end{equation}

Combining the dynamics (\ref{eqn:vehicle}) with (\ref{eqn:quadratic}), any control input  $u_i(t)$ should satisfy
\begin{equation} \label{eqn:quadraticconstraint}
\begin{aligned}
\underbrace{\frac{F_r(v_i(t))}{m_i}}_{L_f^2b(\bm x_i(t))}\! +\! \underbrace{\frac{-1}{m_i}}_{L_gL_fb(\bm x_i(t))}u_i(t)\! + 2p\dot b(\bm x_i(t)) b(\bm x_i(t)) \\+  p \dot b^2(\bm x_i(t)) + 2p^2\dot b(\bm x_i(t)) b^2(\bm x_i(t)) + p^3b^4(\bm x_i(t))\geq\! 0.
\end{aligned}
\end{equation}

\subsection{Reformulated ACC Problem} 

We partition the time interval $[t_i^0,t_i^f]$ 
into a set of equal time intervals $\{[t_i^0, t_i^0+\Delta t), [t_i^0 + \Delta t,t_i^0 + 2\Delta t),\dots\}$, where $\Delta t > 0$. In each interval $[t_i^0 +\omega \Delta t, t_i^0 +(\omega+1) \Delta t)$ ($\omega = 0,1,2,\dots$), we assume the control is constant (i.e., the overall control will be piece-wise constant), and reformulate (approximately) {\it\textbf{Problem} 1} as a set of QPs.  Specifically, 
at $t = t_i^0 +\omega \Delta t$ ($\omega = 0,1,2,\dots$), we solve 
\begin{equation} \label{eqn:objACC}
\bm u_i^*(t) = \arg\min_{\bm u_i(t)} \frac{1}{2}\bm u_i(t)^TH\bm u_i(t) + F^T\bm u_i(t)
\end{equation}
\[\begin{small}
\bm u_i(t)\! =\! \left[\begin{array}{c}  
\! u_i(t)\!\\
\!\delta_{acc}(t)\!
\end{array} \right]\!,
H\! =\! \left[\begin{array}{cc} 
\frac{2}{m_i^2} & 0\\
0 & 2p_{acc}
\end{array} \right]\!, F\! =\!  \left[\begin{array}{c} 
\!\frac{-2F_r(v_i(t))}{m_i^2}\!\\
0
\end{array} \right]. \end{small}
\]
subject to
\[
A_{\text{clf}} \bm u_i(t) \leq b_{\text{clf}},
\]
\[
A_{\text{limit}} \bm u_i(t) \leq b_{\text{limit}},
\]
\[
A_{\text{hocbf\_safety}} \bm u_i(t) \leq b_{\text{hocbf\_safety}}, 
\]
where $p_{acc} > 0$ and the constraint parameters are
$$
\begin{aligned}
A_{\text{clf}} &= [L_gV_{acc}(\bm x_i(t)),\qquad -1],\\
b_{\text{clf}} &= -L_fV_{acc}(\bm x_i(t)) - \epsilon V_{acc}(\bm x_i(t)).
\end{aligned}
$$
$$
\begin{aligned}
A_{\text{limit}} &= \left[\begin{array}{cc} 
-L_gb_{i,1}(\bm x_i(t)), & 0\\
-L_gb_{i,2}(\bm x_i(t)), & 0\\
1, & 0
\end{array} \right],\\
b_{\text{limit}} &= \left[\begin{array}{c}  
L_fb_{i,1}(\bm x_i(t)) + b_{i,1}(\bm x_i(t))\\
L_fb_{i,2}(\bm x_i(t)) + b_{i,1}(\bm x_i(t))\\
c_am_ig
\end{array} \right].
\end{aligned}
$$
$$
A_{\text{hocbf\_safety}} = \left[\begin{array}{cc} 
-L_gL_fb(\bm x_i(t)), & 0
\end{array} \right],
$$
$$
b_{\text{hocbf\_safety}}
\left\{
\begin{array}{lcl}
=L_f^2b(\bm x_i(t)) + p\dot b(\bm x_i(t)) \\\;\;\;+  p\sqrt{\dot b(\bm x(t)) + pb(\bm x_i(t))}, \textbf{\quad for Form 1},\\
=L_f^2b(\bm x_i(t)) + 2p\dot b(\bm x_i(t)) \\\;\;\;+  p^2b(\bm x_i(t)), \textbf{\quad for Form 2},\\
=L_f^2b(\bm x_i(t)) + 2p\dot b(\bm x_i(t)) b(\bm x_i(t)) \\\;\;\;+  p \dot b^2(\bm x_i(t)) + 2p^2\dot b(\bm x_i(t)) b^2(\bm x_i(t)) \\\;\;\;+ p^3b^4(\bm x_i(t)), \textbf{\quad for Form 3}.
\end{array}
\right.
$$

 After solving (\ref{eqn:objACC}), we update (\ref{eqn:vehicle}) with $u^*(t)$, $\forall t\in (t_0 +\omega \Delta t, t_0 +(\omega+1) \Delta t)$. 

\begin{remark} 
	The minimum control constraint in Constraint 1 is not included in (\ref{eqn:objACC}). This follows from Remark \ref{rem:penalty} in Sec.\ref{sec:property} that we may choose small enough $p$ such that the minimum control constraint is satisfied.
\end{remark}

\section{IMPLEMENTATION AND RESULTS}
\label{sec:case}
In this section, we present case studies for {\it\textbf{Problem} 1} to illustrate the properties described in Sec.\ref{sec:property}. As noticed in (\ref{eqn:objACC}), the term $L_gL_fb(\bm x_i(t)) = -\frac{1}{m_i}$ depends only on $m_i$. Therefore, the assumption
from the beginning of Sec.  \ref{sec:property} is satisfied.

 All the computations and simulations were conducted in MATLAB. We used quadprog to solve the quadratic programs and ode45 to integrate the dynamics. The simulation parameters are listed in Table \ref{table:param}.

\begin{table}
	\caption{Simulation parameters for {\it\textbf{Problem} 1}}\label{table:param}
	\begin{center}
		\begin{tabular}{|c||c||c|}
			\hline
			Parameter & Value & Units\\
			\hline
			\hline
			$v_{i}(t_i^0)$ & 20& $m/s$\\
			\hline
			$z_{i,i_p}(t_i^0)$ & 100& $m$\\
			\hline
			$\delta$ & 10& $m$\\
			\hline
			$v_{i_p}(t)$ & 13.89& $m/s$\\
			\hline
			$m_i$ & 1650& $kg$\\
			\hline
			g & 9.81& $m/s^2$\\
			\hline
			$f_0$ & 0.1& $N$\\
			\hline
			$f_1$ & 5& $Ns/m$\\
			\hline
			$f_2$ & 0.25& $Ns^2/m$\\
			\hline
			$v_{max}$ & 30& $m/s$\\
			\hline
			$v_{min}$ & 0& $m/s$\\
			\hline
			$\Delta t$ & 0.1& $s$\\
			\hline
			$\epsilon$ & 10& unitless\\
			\hline
			$c_a$ & 0.4& unitless\\
			\hline
			$c_d$ & 0.4& unitless\\
			\hline
			$p_{acc}$ & 1& unitless\\
			\hline
		\end{tabular}
	\end{center}
	
\end{table}

\subsection{\textbf{Case 1: feasible region for control input (comparison between square root, linear (as in \cite{Nguyen2016}) and quadratic function)}} We set $p$ to 1, 1, 0.1 for Forms 1, 2, 3, respectively. In order to study the values of $b(\bm x_i(t))$ for which the HOCBF constraints (\ref{eqn:sqrtconstraint}), (\ref{eqn:linearconstraint}) and (\ref{eqn:quadraticconstraint}) become active, we show how $u_i(t)$ changes as $b(\bm x_i(t))\rightarrow 0$ in Fig.\ref{fig:feasible}. The dashed lines in Fig. \ref{fig:feasible} denote the value of the right-hand side of the HOCBF constraint (like the one in (\ref{eqn:hocbfcontrol})), and the solid lines are the optimal controls obtained by solving (\ref{eqn:objACC}). When the dashed lines and solid lines coincide, the HOCBF constraints (\ref{eqn:sqrtconstraint}), (\ref{eqn:linearconstraint}) and (\ref{eqn:quadraticconstraint}) are active.

\begin{figure}[thpb]
	\centering
	\includegraphics[scale=0.65]{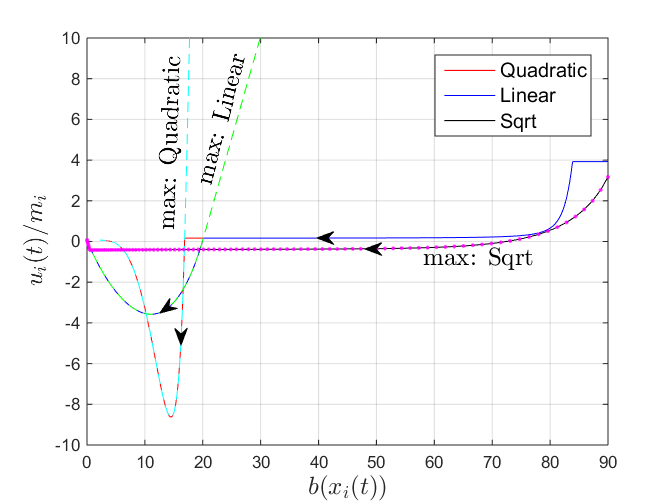}
	\caption{Control input $u_i(t)$ as $b(\bm x_i(t))\rightarrow 0$ for Forms 1, 2, 3 (square root, linear and quadratic class $\mathcal{K}$ functions, respectively). The arrows denote the changing trend for $b(\bm x_i(t))$ with respect to time.}	
	\label{fig:feasible}
\end{figure}

The HOCBF constraint (\ref{eqn:quadraticconstraint}) becomes active when $b(\bm x_i)$ take less value than the other two HOCBF constraints (\ref{eqn:sqrtconstraint}) and (\ref{eqn:linearconstraint}), while the HOCBF constraint (\ref{eqn:sqrtconstraint}) becomes active from the beginning ($b(\bm x_i(t)) = 90$) as shown in Fig.\ref{fig:feasible}. Therefore, the feasible region for $u_i(t)$ is limited when we choose the class $\mathcal{K}$ functions in Form 1, and thus, {\it\textbf{Problem} 1} is over-constrained and the performance of vehicle $i$ is reduced. The feasible region for $u_i(t)$ is bigger under quadratic function than linear function when $b(\bm x_i)$ takes large values but tends to require larger control input $u_i(t)$ after (\ref{eqn:quadraticconstraint}) becomes active. 

\subsection{\textbf{Case 2: conflict between braking limitation and HOCBF constraint}} By Remark \ref{rem:penalty}, we may find a small enough $p$ in (\ref{eqn:linear}) and (\ref{eqn:quadratic}) such that the HOCBF constraints (\ref{eqn:linearconstraint}) and (\ref{eqn:quadraticconstraint}) do not conflict with the minimum control limitation. We present the case studies for the linear and quadratic class $\mathcal{K}$ functions in Fig.\ref{fig:regulate} and Fig.\ref{fig:regulateQ}, respectively.

\begin{figure}[thpb]
	\centering 
	\includegraphics[scale=0.65]{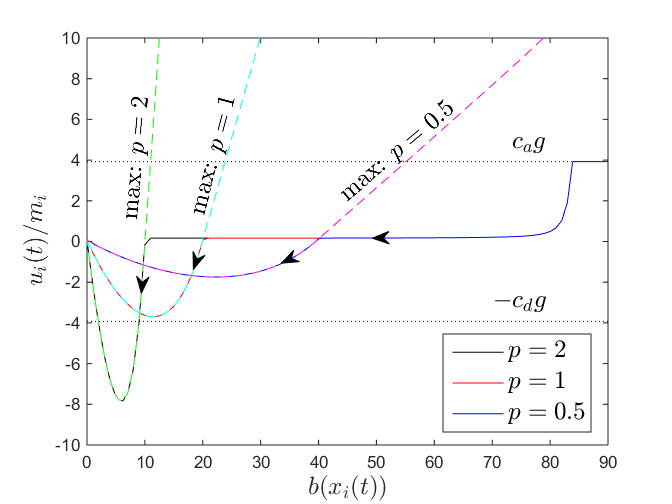}
	\caption{Control input $u_i(t)$ as $b(\bm x_i(t))\rightarrow 0$ for different $p$ values under linear class $\mathcal{K}$ function. The arrows denote the changing trend for $b(\bm x_i(t))$ with respect to time.}
	\label{fig:regulate}	
\end{figure}  

\begin{figure}[thpb]
	\centering
	\includegraphics[scale=0.65]{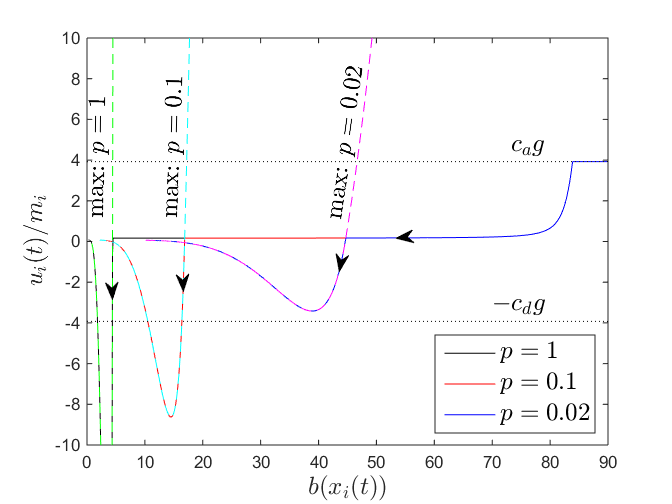}
	\caption{Control input $u_i(t)$ as $b(\bm x_i(t))\rightarrow 0$ for different $p$ values under quadratic class $\mathcal{K}$ function. The arrows denote the changing trend for $b(\bm x_i(t))$ with respect to time.}
	\label{fig:regulateQ}	
\end{figure}

In Fig.\ref{fig:regulate} and Fig.\ref{fig:regulateQ}, the HOCBF constraint does not conflict with the braking limitation when $p = 1$ and $p = 0.02$ for linear and quadratic class $\mathcal{K}$ functions, respectively. The minimum control input increases as $p$ decreases.

Then, we set $p$ to be $ 1, 1, 0.02$ for Forms 1, 2, 3, respectively. We present the speed and control profiles in Fig.\ref{fig:state} and the forward invariance of the set $C_1(t) \cap C_2(t)$, where $C_1(t):= \{\bm x_i(t): b(\bm x_i(t)) \geq 0\}$ and $C_2(t) := \{\bm x_i(t): \psi_1(\bm x_i(t)  \geq 0\}$ in Fig.\ref{fig:sets}.

\begin{figure}[thpb]
	\centering
	\includegraphics[scale=0.65]{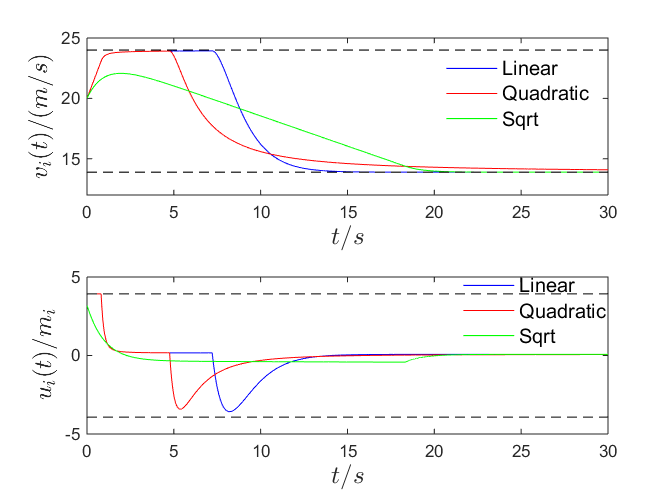}
	\caption{The speed and control profiles for vehicle $i$ under Forms 1, 2, 3 (square root ($p = 1$), linear ($p = 1$) and quadratic ($p = 0.02$) class $\mathcal{K}$ functions, respectively). The ACC problem for the square root class $\mathcal{K}$ function is over-constrained and vehicle $i$ can not reach the desired speed $v_d$.}	
	\label{fig:state}
\end{figure}

\begin{figure}[thpb]
	\centering
	\includegraphics[scale=0.65]{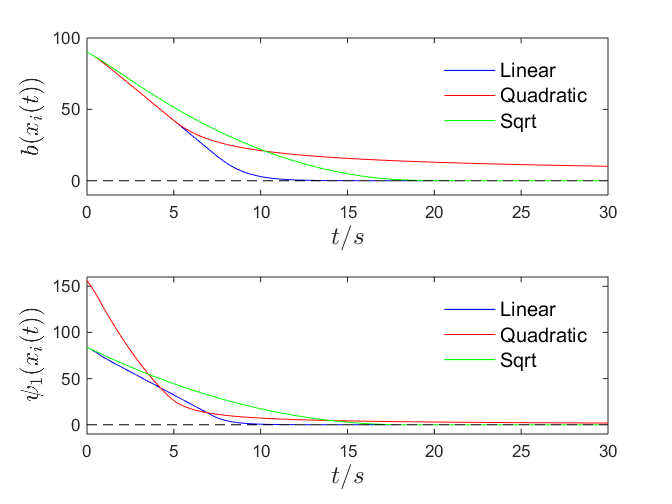}
	\caption{The variation of functions $b(\bm x_i(t))$ and $\psi_1(\bm x_i(t)$ under Forms 1, 2, 3 (square root ($p = 1$), linear ($p = 1$) and quadratic ($p = 0.02$) class $\mathcal{K}$ functions, respectively). $b(\bm x_i(t))\geq 0$ and $\psi_1(\bm x_i(t)\geq 0$ imply the forward invariance of the set $C_1(t)\cap C_2(t)$.}	
	\label{fig:sets}
\end{figure}

We can increase the $p$ value for Form 1 such that the HOCBF constraint (\ref{eqn:sqrtconstraint}) is not over-constrained. For $p = 2$, we show the speed and control profiles in Fig.\ref{fig:state_n} and the forward invariance of the intersection set in Fig.\ref{fig:sets_n}. The HOCBF $b(\bm x_i(t))$ decreases faster to 0 for the square root class $\mathcal{K}$ function than the linear class $\mathcal{K}$ function and tends to stay away from 0 under quadratic class $\mathcal{K}$ function. As shown in Fig.\ref{fig:sets_n}, the values of HOCBF $b(\bm x_i(t)) $ are $ 0.0193, 0.0413, 15.6669$ at $t = 15s$ and  $2.8964\times 10^{-7}, 4.4685\times 10^{-4}, 12.9729$ at $t = 20s$ for square root, linear and quadratic class $\mathcal{K}$ functions, respectively.

\begin{figure}[thpb]
	\centering
	\includegraphics[scale=0.65]{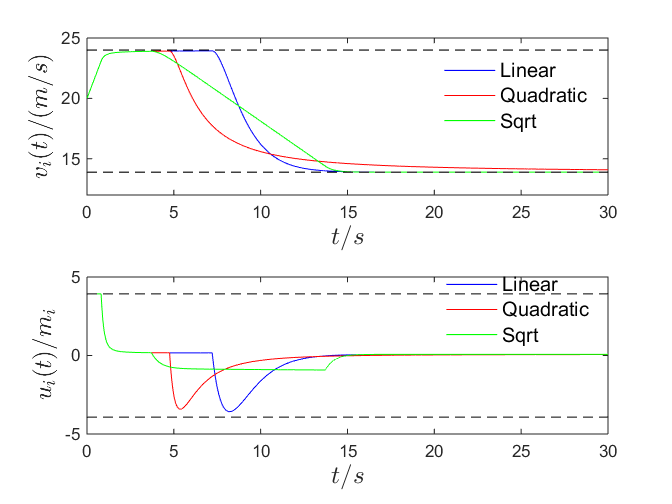}
	\caption{The speed and control profiles for vehicle $i$ under Forms 1, 2, 3 (square root ($p = 1$), linear ($p = 1$) and quadratic ($p = 0.02$) class $\mathcal{K}$ functions, respectively). The solutions to the ACC problem behave well for all three class $\mathcal{K}$ functions.}	
	\label{fig:state_n}
\end{figure}

\begin{figure}[thpb]
	\centering
	\includegraphics[scale=0.65]{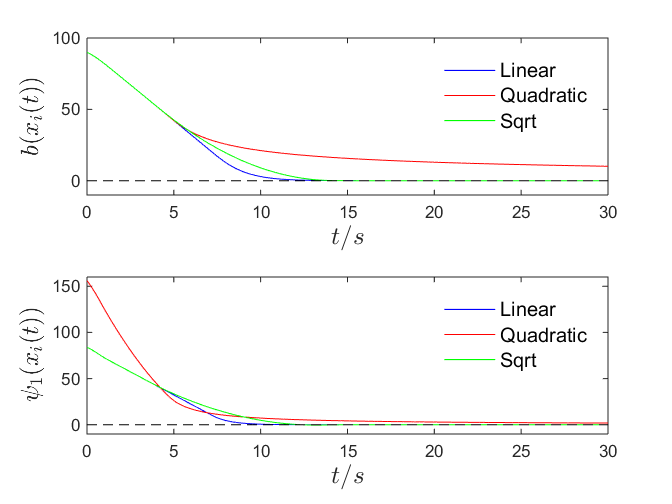}
	\caption{The variation of functions $b(\bm x_i(t))$ and $\psi_1(\bm x_i(t)$ under Forms 1, 2, 3 (square root ($p = 1$), linear ($p = 1$) and quadratic ($p = 0.02$) class $\mathcal{K}$ functions, respectively). $b(\bm x_i(t))\geq 0$ and $\psi_1(\bm x_i(t)\geq 0$ imply the forward invariance of the set $C_1(t)\cap C_2(t)$.}	
	\label{fig:sets_n}
\end{figure}

\section{CONCLUSION \& FUTURE WORK}
\label{sec:conclusion}

We presented an extension of control barrier functions to high order control barrier functions, which allows to deal with high relative degree systems. We also showed how we may deal with the conflict between the HOCBF constraints and the control limitations. We validated the approach by applying it to an automatic cruise control problem with 
constant safety constraint. In the future, we will apply the HOCBF method to more complex problems, such as differential flatness in high relative degree system and bipedal walking.

\addtolength{\textheight}{-6cm}   





\bibliographystyle{plain}
\bibliography{HOCBF}

\begin{thebibliography}{10}

\bibitem{Aaron2012}
Aaron~D. Ames, K.~Galloway, and J.~W. Grizzle.
\newblock Control lyapunov functions and hybrid zero dynamics.
\newblock In {\em Proc. of 51rd IEEE Conference on Decision and Control}, pages
  6837--6842, 2012.

\bibitem{Aaron2014}
Aaron~D. Ames, Jessy~W. Grizzle, and Paulo Tabuada.
\newblock Control barrier function based quadratic programs with application to
  adaptive cruise control.
\newblock In {\em Proc. of 53rd IEEE Conference on Decision and Control}, pages
  6271--6278, 2014.

\bibitem{Artstein1983}
Zvi Artstein.
\newblock Stabilization with relaxed controls.
\newblock {\em Nonlinear Analysis: Theory, Methods \& Applications},
  7(11):1163--1173, 1983.

\bibitem{Aubin2009}
Jean~Pierre Aubin.
\newblock {\em Viability theory}.
\newblock Springer, 2009.

\bibitem{Boyd2004}
Stephen~P Boyd and Lieven Vandenberghe.
\newblock {\em Convex optimization}.
\newblock Cambridge university press, New York, 2004.

\bibitem{Freeman1996}
R.~A. Freeman and P.~V. Kokotovic.
\newblock {\em Robust Nonlinear Control Design}.
\newblock Birkhauser, 1996.

\bibitem{Galloway2013}
K.~Galloway, K.~Sreenath, A.~D. Ames, and J.W. Grizzle.
\newblock Torque saturation in bipedal robotic walking through control lyapunov
  function based quadratic programs.
\newblock {\em preprint arXiv:1302.7314}, 2013.

\bibitem{Glotfelter2017}
P.~Glotfelter, J.~Cortes, and M.~Egerstedt.
\newblock Nonsmooth barrier functions with applications to multi-robot systems.
\newblock {\em IEEE control systems letters}, 1(2):310--315, 2017.

\bibitem{Hsu2015}
Shao-Chen Hsu, Xiangru Xu, and Aaron~D. Ames.
\newblock Control barrier function based quadratic programs with application to
  bipedal robotic walking.
\newblock In {\em Proc. of the American Control Conference}, pages 4542--4548,
  2015.

\bibitem{Khalil2002}
Hassan~K. Khalil.
\newblock {\em Nonlinear Systems}.
\newblock Prentice Hall, third edition, 2002.

\bibitem{Lindemann2018}
L.~Lindemann and D.~V. Dimarogonas.
\newblock Control barrier functions for signal temporal logic tasks.
\newblock In {\em Proc. of 57th IEEE Conference on Decision and Control}, 2018.
\newblock to appear.

\bibitem{Malikopoulos2018}
A.~A. Malikopoulos, C.~G. Cassandras, and Yue~J. Zhang.
\newblock A decentralized energy-optimal control framework for connected and
  automated vehicles at signal-free intersections.
\newblock {\em Automatica}, 2018(93):244--256, 2018.

\bibitem{Nguyen2016}
Quan Nguyen and Koushil Sreenath.
\newblock Exponential control barrier functions for enforcing high
  relative-degree safety-critical constraints.
\newblock In {\em Proc. of the American Control Conference}, pages 322--328,
  2016.

\bibitem{Panagou2013}
D.~Panagou, D.~M. Stipanovic, and P.~G. Voulgaris.
\newblock Multi-objective control for multi-agent systems using lyapunov-like
  barrier functions.
\newblock In {\em Proc. of 52nd IEEE Conference on Decision and Control}, pages
  1478--1483, Florence, Italy, 2013.

\bibitem{Prajna2004}
Stephen Prajna and Ali Jadbabaie.
\newblock Safety verification of hybrid systems using barrier certificates.
\newblock {\em In Hybrid Systems: Computation and Control}, pages 477--492,
  2004.

\bibitem{Prajna2007}
Stephen Prajna, Ali Jadbabaie, and George~J. Pappas.
\newblock A framework for worst-case and stochastic safety verification using
  barrier certificates.
\newblock {\em IEEE Transactions on Automatic Control}, 52(8):1415--1428, 2007.

\bibitem{Sontag1983}
E.~Sontag.
\newblock A lyapunov-like stabilization of asymptotic controllability.
\newblock {\em SIAM Journal of Control and Optimization}, 21(3):462--471, 1983.

\bibitem{Sontag1989}
E.~Sontag.
\newblock A universal construction of artstein's theorem on nonlinear
  stabilization.
\newblock {\em Systems \& Control Letters}, 13:117--123, 1989.

\bibitem{Tee2009}
Keng~Peng Tee, Shuzhi~Sam Ge, and Eng~Hock Tay.
\newblock Barrier lyapunov functions for the control of output-constrained
  nonlinear systems.
\newblock {\em Automatica}, 45(4):918--927, 2009.

\bibitem{Wieland2007}
Peter Wieland and Frank Allgower.
\newblock Constructive safety using control barrier functions.
\newblock In {\em Proc. of 7th IFAC Symposium on Nonlinear Control System},
  2007.

\bibitem{Wisniewski2013}
Rafael Wisniewski and Christoffer Sloth.
\newblock Converse barrier certificate theorem.
\newblock In {\em Proc. of 52nd IEEE Conference on Decision and Control}, pages
  4713--4718, Florence, Italy, 2013.

\bibitem{Wu2015}
Guofan Wu and Koushil Sreenath.
\newblock Safety-critical and constrained geometric control synthesis using
  control lyapunov and control barrier functions for systems evolving on
  manifolds.
\newblock In {\em Proc. of the American Control Conference}, pages 2038--2044,
  2015.

\bibitem{Wei2019}
Wei Xiao, Calin Belta, and Christos~G. Cassandras.
\newblock Decentralized merging control in traffic networks: A control barrier
  function approach.
\newblock In {\em Proc. ACM/IEEE International Conference on Cyber-Physical
  Systems}, 2019.
\newblock To appear.

\end{thebibliography}

\end{document}